\newcommand{\eps}{\varepsilon}
\newcommand{\A}{\mathcal{A}}
\newcommand{\coeff}[2]{{\operatorname{coeff}_{#2}(#1)}}
\def\sign{\operatorname{sign}}
\def\val{\operatorname{val}}
\newtheorem{theorem}{Theorem}
\newtheorem{conjecture}[theorem]{Conjecture}
\newtheorem{claim}[theorem]{Claim}
\newtheorem{definition}[theorem]{Definition}
\newtheorem{lemma}[theorem]{Lemma}
\def\R{\mathbb R}
\def\eps{\varepsilon}
\def\mynorm{{2,\infty}}
\renewcommand{\cos}{\operatorname{cos}}
\renewcommand{\sin}{\operatorname{sin}}
\title{Paraunitary Matrices, Entropy, Algebraic  Condition Number  and Fourier Computation
\footnote{This work was supported by a consolidator ERC grant INF-COMP-TRADEOFF}}
\author{Nir Ailon, Technion IIT}
\begin{document}

\maketitle
\def\spectral{{\operatorname{spec}}}
\def\sign{\operatorname{sgn}}
\def\dim{n}
\def\KL{{\operatorname{KL}}}
\def\C{\mathbb C}
\def\R{\mathbb R}
\def\P{\mathcal P}
\def\Q{\mathcal Q}
\def\trace{\operatorname{tr}}
\def\diag{\operatorname{diag}}
\def\rank{\operatorname{rank}}
\def\F{{\mathcal F}}
\def\Id{\operatorname{Id}}
\def\f{{\hat f}}
\def\Z{{\mathbb Z}}
\def\X{{\cal X}}
\def\B{{\cal B}}
\def\Ellips{{\cal E}}
\def\S{{\cal S}}
\def\err{{\operatorname{err}}}
\def\MLAE{{\operatorname{MLAE}}}
\def\N{{\mathcal N}}
\def\tr{\operatorname{tr}}
\def\consts{\mathcal{C}}
\def\ring{\mathcal{R}}
\def\Im{{\operatorname{Im}}}
\def\DFT{\operatorname{DFT}}
\def\conv{\operatorname{CONV}}
\def\poly{\operatorname{poly}}

\setcounter{page}{0}
\begin{abstract}
The Fourier Transform is one of the most important linear transformations used in science and engineering.  Cooley and Tukey's Fast Fourier Transform (FFT) from 1964 is a method for computing this transformation in time $O(n\log n)$.   From a lower bound perspective,  relatively little is known.  Ailon shows in 2013 an $\Omega(n\log n)$ bound for computing the normalized Fourier Transform assuming only unitary operations on two coordinates are allowed at each step, and no extra memory is allowed.  In 2014, Ailon then improved the result to show that, in a $\kappa$-well conditioned computation, Fourier computation can be sped up by no more than $O(\kappa)$.   The main conjecture is that Ailon's result can be exponentially improved, in the sense that $\kappa$-well condition cannot  admit $\omega(\log \kappa)$ speedup.  

The main result here is that `algebraic' $\kappa$-well condition cannot admit $\omega(\sqrt \kappa)$ speedup.  
One equivalent definition of algebraic condition number 
is related to the degree of polynomials naturally arising as the computation evolves.  Using the maximum modulus theorem from complex analysis, we show that algebraic condition number upper bounds standard condition number, and equals it in certain cases.    Algebraic condition number is an interesting measure of numerical computation stability in its own right, and provides a novel computational lens.  Moreover, based on  evidence from other recent  related work,  we  believe that the approach of algebraic condition number has a good chance of establishing an algebraic version of the main conjecture.
\end{abstract}

\section{Introduction}
The (discrete) normalized Fourier transform is a complex linear mapping sending an input $x\in \C^n$ to $y=Fx\in \C^n$, where $F$ is an $n\times n$ unitary matrix defined by
\begin{equation}\label{dft} F(k,\ell) = n^{-1/2}e^{-i2\pi k\ell/n}\ .
\end{equation}
The Fast Fourier Transform (FFT) of Cooley and Tukey \cite{CooleyT64} is a method for computing the Fourier transform 
of a vector $x\in \C^n$
in time $O(n\log n)$ using a so called linear-algebraic algorithm.  A linear-algebraic algorithm's state at each point in the computation is a vector in $\C^n$, representing a linear transformation of the input.  Each coordinate of the vector is a complex number stored in a memory location (or rather, two real numbers, one representing the real part and the other the imaginary part).  The next state is obtained from the current one by a simple linear algebraic transformation.   We will define this precisely in what follows.

As for lower bounds, it is trivial  that computing the Fourier Transform requires a linear  number of steps, because each 
coordinate in the output depends on each coordinate in the input.

There has not been much prior work on better bounds, for Fourier transform over the real or complex field,
with respect to a reasonable model of computation.

In 1973, Morgenstern proved that if the modulus of constants used in an unnormalized Fourier transform algorithm are bounded by a global constant,
then the number of steps required  is at least 
$\Omega( n\log n)$.  He used a potential function related to matrix determinant.  However, this result does not provide lower bounds for  normalized Fourier transform, which can be computed by running the  unnormalized FFT (as described in
\cite{CooleyT64}), dividing all the
constants by the global constant $\sqrt 2$.  
Pan \cite{PAN198611} shows an $\Omega(n \log n)$ bound for so-called \emph{synchronized} algorithms which are, roughly speaking,
based on a layered computational graph (as in the well-known FFT).  It is not clear why synchronicity is required for optimality.

Papadimitriou  proves in \cite{Papadimitriou:1979:OFF:322108.322118} that the graph structure
of Cooley and Tukey's FFT algorithm is optimal if we are only allowed to multiply by $n$'th roots of unity constants,
and the transformation is computed in a finite field.  It should be noted that removing the finite field requirement from
Papadimitriou's paper results in a lower bound that is subsumed by that of Morgenstern's, and in particular, applies
only to the unnormalized FT.  Again, it is not clear how to obtain
something for a normalized FFT from this result.

 Winograd  \cite{Winograd76} was able to reduce by $20\%$  the number of multiplications in Cooley and Tukey's FFT, without changing the number of additions.
In \cite{Winograd76b} the same author shows that at least linearly many multiplications are needed for performing Cooley and Tukey's algorithm for computing the \emph{unnormalized} FFT, defined as $\sqrt n$
times (\ref{dft}).

We also note that in the quantum  world, an $O(\log^2 n)$-time algorithm is known using quantum gates, each  unitary acting on only $4$ entries.  Refer to \cite{Hales00animproved} and to references within for this line of work, which is not further discussed here.

 Ailon   \cite{Ailon13}    considered a computational model  allowing only $2\times 2$ rotation gates acting on a pair of coordinates. He showed an $\Omega(n \log n)$ lower bound on the number of such gates required for the normalized Fourier transform.  The proof was done by defining a potential function on the matrices $M^{(t)}$ defined as the transformation that takes the input to the machine state in step $t$.  The potential
function is simply the sum of Shannon entropy of the probability distributions defined by the squared modulus of elements in the
matrix rows.  (Due to orthogonality, each row, in fact,  defines a probability distribution).  That work raised the question of \emph{why shouldn't it be possible to gain speed by `escaping' the unitary group?}
 In \cite{DBLP:journals/toct/Ailon16} he added nonzero, constant multiplication gates acting on a single coordinate to the model.
 By generalizing the entropy potential function to the  \emph{quasi-entropy} function, so called because it is applied  to possibly negative numbers,
the  approach allows proving that a speedup in computing \emph{any} scaling of FFT \emph{requires} either working
  with very large or very small  numbers (asymptotically growing accuracy).   The bounds are expressed using a quantitative relationship between time (number of  gates) versus the standard notion  of  condition number of matrices.  More precisely, it is shown that speedup of FFT by factor of $b$ implies that at least one $M^{(t)}$ must be $\Omega(b)$-ill-conditioned.  In \cite{DBLP:conf/icalp/Ailon15}, Ailon further showed that speedup implies ill-condition at \emph{many} points in the computation, affecting independent information in the input. 
Ailon's results \cite{Ailon13, DBLP:journals/toct/Ailon16, DBLP:conf/icalp/Ailon15}  are believed not to be tight.
The main open question, informally stated, is as follows. We will restate it as Conjecture~\ref{conj2} in a precise way in Section~\ref{evidence}, after building our framework.
\begin{conjecture}\label{conj1}[Informal]
Any speedup of FFT (counting linear algebraic operations), if existed, would require `exponential (in the speedup) ill-condition'.
\end{conjecture}

\noindent
Formal 
versions of Conjecture~\ref{conj1} are given in Conjectures~\ref{conj2} and~\ref{conj3} below.

\subsection{Our Contribution}
This work defines a new  and natural \emph{algebraic} notion of matrix condition number, which upper bounds the usual definition 
of condition number, and equals it in some cases.  Using this new notion, we are able to show that  $\kappa$-well conditioned computation allows only $O(\sqrt\kappa)$ speedup of FFT, in other words, a quadratic improvement on the previous result, albeit with a weaker notion of condition. More importantly, we believe that an algebraic version
of Conjecture~\ref{conj1} is achievable with respect to algebraic condition, and justify this belief in Section~\ref{evidence}.

\subsection{\underline{The RAM Computational Model and This Work }}
Before we continue, we shall emphasize a point about a popular computational model:  \emph{RAM}. Roughly speaking, in this model, a computer word contains $\Theta(\log n)$ bits of precision, for input of size $n$, and each addition/multiplication is counted as $1$ operation.  While highly suitable for problems of complexity that allows ``hiding poly-log factors",  this model should be used with caution for lower bounding FT computation.  One of the main arguments often stated against Conjecture~1 is that poly$(n)$ ill-condition, in the extreme case of $\Theta(\log n)$ speedup of FFT, roughly means $\Theta(\log n)$ bit words, which is assumed by the RAM model anyway, so why is the conjecture interesting!?  
But why do we assume the RAM model?
It is tempting to think that we must assume it, due to the following example: If we were to allow the more realistic $O(1)$ bits per word,
and compute the normalized Fourier transform of $(1,1,1,....,1)$, then the result would have a coordinate equalling $\sqrt n$ and
hence  would overflow.  But notice that all the other coordinates of the output equal $0$.  Requiring $\Theta(\log n)$
bits per word in all words is an overkill that merely satisfies an instinct to work with equal, fixed sized words.  

For readers who insist on a RAM-like model:  Our line of work studies the complexity of computing the FT for inputs
in $n$ dimensional real space, containing $O(1)$ bits per word \emph{on average} (over the coordinates).  Such representation allows describing a point inside
the Euclidean ball of radius $\sqrt n$, to within fixed accuracy per coordinate.  The size of the word can vary with the computation, 
according to the number stored in the word.  Additionally, the cost of performing each operation grows with the number of bits
involved in the operands.  The main argument is that speedup of FFT requires a growing (in the speedup)
number of bits in the words participating in the algorithm steps, and to quantify this growth.

Alternatively, the reader is invited to think about well-conditioned linear computation as a computational model worthy of discussion in its own right, and will hopefully be convinced that it is more suitable to think about FT complexity in such a model, rather than under a RAM or  a  RAM\emph{ish} lens.

\section{Fourier Transform Types}
In theory and practice of  engineering and computer science, there are two main types of Fourier transforms considered.
The Discrete Fourier Transform (DFT) in $n$ dimensions, as defined in (\ref{dft}), is a complex unitary mapping defined by the characters of the  Abelian group $\Z/n\Z$.  The Walsh-Hadamard transform is a real orthogonal mapping defined by
the characters of the $n$ dimensional binary hypercube $(\Z/2\Z)^n$.  More precisely, for $n$ an integer power of 2,
the $(i,j)$'th matrix element
 is defined as $\frac 1 {\sqrt n}  (-1)^{\langle [i-1], [j-1]\rangle}$, where $\langle\cdot,\cdot\rangle$ is dot-product, and $[p]$ denotes the bit
representation of the integer $p\in \{0,\dots, n-1\}$ as a vector of $\log n$ bits.  Similarly to FFT for DFT, there is an $O(n\log n)$ algorithm for computing the Walsh-Hadamard transform of an input $x$.

  Throughout, we will assume $n$ is an integer power of $2$ and will use $F$ to denote either the $n-$Walsh-Hadamard, which is a real orthogonal transformation, or the real representation of the  $(n/2)$-DFT, which is a
  complex unitary transformation.   By \emph{real representation}, we mean the standard view of  a complex
  number $a+\iota b$ as a column vector $(a,b)^T$ in two dimensional real space, and the multiplicative action $a+\iota b \mapsto (c+\iota d)(a+\iota b)$ given by left-multiplication by the matrix $\left (\begin{matrix} c & -d \\ d & c \end{matrix}\right)$.  In that way, a vector in $n$ dimensional complex space embeds to a $2n$ dimensional real space, and a $n\times m$ complex matrix embeds as a $(2n)\times (2m)$ real matrix.
This allows us to avoid defining a computational model over the complex field, because such computation can be emulated by reals anyway.  Nevertheless, our analysis will require extension to the complex field.
  
\section{Computational Model and Notation}\label{sec:modelnotation}



We remind the reader of the computational model discussed in \cite{Ailon13,DBLP:journals/toct/Ailon16}, which is a special case of
the linear computational model.  The machine state represents a vector in $\R^\ell$ for some $\ell\geq n$,
where initially it equals the input $x\in \R^n$ (with possible padding by zeroes, in case $\ell>n$).  Each step (gate)
is either a \emph{rotation} or a \emph{constant}.   

A rotation applies a $2\times 2$ orthogonal operator mapping a pair of
machine state coordinates (rewriting the result of the mapping to the two coordinates).  
Note that a rotation includes the $2\times 2$ mapping affecting a single coordinate, e.g.
$\left ( \begin{matrix} -1 & 0 \\ 0 & 1 \\ \end{matrix} \right )$.  (Technically this is a reflection, but we
refer to it as a \emph{rotation gate} here.)

A constant gate multiplies a single machine state coordinate (rewriting the result) by
a positive real constant.  
In case $\ell=n$ we say that we are in the in-place model.   Any nonsingular linear mapping over $\R^n$ can be decomposed into a sequence of rotation and constant gates in the in-place model, and hence our model is universal.  Normalized FFT  works in the in-place model, using rotations only.  A restricted method for dealing
 with $\ell>n$ was developed in \cite{DBLP:journals/toct/Ailon16}. We focus in this work on the in-place model only.

Since both rotations and constants apply a linear transformation on the machine state, their composition is a linear transformation.  If $\A_n$ is an in-place algorithm for computing a  linear mapping over $\R^n$, it
is convenient to write it as $\A_n = (M^{(0)}=\Id, M^{(1)}, \dots, M^{(m)})$ where $m$ is the number of steps,
$M^{(t)}\in \R^{n\times n}$ is the mapping that satisfies that for input $x\in \R^n$ (the initial machine state),
$M^{(t)}x$ is the machine state after $t$ steps.  ($\Id$ is the identity matrix).  The matrix $M^{(m)}$ is the target
transformation, which will typically be  $F$ in our setting.  For $t\in[m] := \{1,2,\dots, m\}$, if  the $t$'th gate is a rotation, then $M^{(t)}$ differs
from $M^{(t-1)}$ in at most two rows, and if the $t$'th gate is a constant, then $M^{(t)}$ defers from $M^{(t-1)}$
in at most one row.  

Throughout, for a matrix $M$, we let
$M_{i,:}$ denote the $i$'th row of $M$.  The symbol $\iota$ denotes $\sqrt{ -1}$.  For an integer $a$, notation $[a]$ is shorthand for the set $\{1,\dots, a\}$.  All logarithms are assumed to be base $2$, unless the base is explicit, as in $\log_5 6$.  For a  matrix $M$ over the field of real or complex numbers, $\|M\|$ is the spectral norm.

\section{Replacing constant gates with an  indeterminate}
Fix an in-place algorithm $\A=(M^{(0)}=\Id,M^{(1)}, \dots, M^{(m)} = F)$ computing $F$.
Let $\consts_\A$ denote the set of values  used in the constant gates.
Let $0<\Delta< 1$ be some real number.  
If each $c\in \consts_\A$ is an integral power of $\Delta$, then we say that $\A$ is integral with respect to $\Delta$.
 We will assume that $\Delta \geq  2/3$ (otherwise we could replace $\Delta$
with $\Delta^{1/\ell}$ for a suitable integer $\ell$, with respect to which $\A$ is still integral).

Assume that  $\A$ is integral with respect to some $2/3 \leq\Delta< 1$.   The integrality assumption will be removed in Section~\ref{sec:def:algcond} using limit arguments.
 We define a new algorithm $\A_\Delta$ with
corresponding matrices $(M^{(0)}_\Delta,\dots, M^{(m)}_\Delta)$ where each $M^{(t)}_\Delta$ is now an $n\times n$
matrix over the ring of Laurent polynomials with complex  coefficients  in an indeterminate $z$.\footnote{We remind the reader that a Laurent polynomial possibly has negative degree monomials.} (Throughout, by \emph{polynomials} we refer to Laurent polynomials, and by \emph{proper polynomials} we mean that only nonnegative power monomials are allowed). In what follows, we may interchangeably  write $p$ or $p[z]$ if it is clear from the context that
$p$ is a (Laurent) polynomial in $z$.

To get accustomed  to our notation, note that for some matrix $A=A[z]$ in this ring we can write e.g. $A_{1,3}[7]$,  referring to the (complex) number in the first row, third column
of the matrix obtained by assigning the value $7$ to $z$ in $M^{(t)}_\Delta$.
We now inductively define $M^{(t)}_\Delta$.    The first matrix $M^{(0)}_\Delta$ is simply $M^{(0)} = \Id$.
Having defined $M^{(t-1)}_\Delta$, we define $M^{(t)}_\Delta$ depending on whether the $t$'th gate is a rotation,  or a constant gate.  In case of rotation, the matrix $M^{(t)}_\Delta$ is obtained from $M^{(t-1)}_\Delta$ by performing the same transformation  giving $M^{(t)}$ from $M^{(t-1)}$.  
 More precisely, if $M^{(t)}$ is obtained from $M^{(t-1)}$ by a  rotation matrix $R$,  namely
 $M^{(t)} = R M^{(t-1)}$, then
 $M^{(t)}_\Delta = R M^{(t-1)}_\Delta$.

In case of a constant gate with value $0<c\in \consts_\A$, we  replace the value $c$ with the monomial $z^{\log_\Delta c}$.  More precisely, if $M^{(t)}$ is obtained from $M^{(t-1)}$ by left-multiplication
with the matrix $\diag(1,\dots, 1, c, 1,\dots, 1)$, then
$$ M^{(t)}_\Delta = \diag(1,\dots, 1, z^{\log_\Delta c}, 1, \dots, 1)M^{(t-1)}_\Delta\ .$$

By the integrality of $\A$ with respect to $\Delta$, the number $\log_\Delta c$ is indeed an integer.  
From the definition of $M^{(t)}_\Delta[z]$,  a simple induction implies that  for all $t\in \{0\dots m\}$,
$M^{(t)}_\Delta[\Delta] = M^{(t)}$.
In words, this means that the $t$'th matrix in the original algorithm $\A$ is obtained by polynomial evaluation at $\Delta$.

Throughout, for a polynomial $p=p[z]$, we let $\coeff{p}{i}$ denote the coefficient (matrix, vector or scalar)
of $p$ corresponding to $z^i$.  In particular, $p = \sum_i \coeff{p}{i} z^i$. \footnote{We avoid subscripting for denoting polynomial coefficients, because that would be confused with matrix or vector indexing.}

\subsection{Paraunitary Matrices and Algebraic Condition Number}

Here and throughout, for a complex number $u$, $\bar u$ is its complex conjugate.  For a complex
polynomial $p = p[z]$, complex conjugation $\bar p$ is the polynomial obtained by complex-conjugating the coefficients of $p$
and substituting $z^{-1}$ for $z$.
For a polynomial matrix $M=M[z]$, the matrix $M^*$ is obtained by transposing $M$ and applying \emph{polynomial} complex conjugation (as just defined) to all its elements.
 For a nonzero polynomial $p[z]$, let $\deg(p)$ denote the maximal degree of $z$ (with a nonzero coefficient), and let  $\val(p)$  denote the minimal degree.   (The operator name $\val$ is short for \emph{valuation}, a standard algebraic abstraction for which ``min degree'' is the most common special case.)  
For example, if $p[z] = -6z^{-2}+3+z^5$, then $\deg(p)=5, \val(p)=-2$.  
Note that $\deg(p)$ may be negative, and $\val(p)$ may be positive.
For a polynomial matrix $M[z]$,  $\deg(M) = \max_{i,j} \deg(M_{i,j})$
and $\val(M) = \min_{i,j} \val(M_{i,j})$.

\begin{definition}\label{dfn:paraunitary}
A complex polynomial matrix $U[z]$
is \emph{paraunitary} if
$ U^* U = \Id.$
\end{definition}
Paraunitary matrices are used in signal processing 
(see e.g. \cite{paraunitary2}, chapter 14), 
and  have some useful
properties for our purposes.  First, they
  form a multiplicative group.  This fact is useful for verifying the following:
\begin{claim}\label{clm:allparaunitary}
The  matrices $M^{(t)}_\Delta$ are paraunitary for all $t$.
\end{claim}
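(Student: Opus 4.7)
The plan is a straightforward induction on $t$. The base case $t=0$ is immediate: $M^{(0)}_\Delta=\Id$ and $\Id^*\Id=\Id$. For the inductive step, I would invoke the group property of paraunitary matrices (mentioned just before the claim): if $M^{(t)}_\Delta = G_t\cdot M^{(t-1)}_\Delta$ and $M^{(t-1)}_\Delta$ is paraunitary, it suffices to verify that the left multiplier $G_t$ corresponding to the $t$-th gate is itself paraunitary. So the proof reduces to two one-line verifications, one per gate type.

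For a rotation gate, $G_t$ is a real orthogonal matrix (a $2\times 2$ rotation acting on two coordinates, padded by identity elsewhere). Since $G_t$ has constant real entries, polynomial conjugation acts trivially on each entry and polynomial transposition is ordinary transposition, so $G_t^*=G_t^T$. Then $G_t^*G_t = G_t^T G_t = \Id$ by orthogonality.

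For a constant gate with value $c=\Delta^k$ (where $k=\log_\Delta c\in\Z$ by integrality of $\A$ with respect to $\Delta$), we have $G_t = \diag(1,\ldots,1,z^k,1,\ldots,1)$. The entry $z^k$ has constant (real) coefficient $1$, so by the definition of polynomial conjugation, its conjugate is $z^{-k}$. Hence $G_t^* = \diag(1,\ldots,1,z^{-k},1,\ldots,1)$ and $G_t^*G_t = \Id$.

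There is no serious obstacle here; the only thing worth being careful about is that the definition of polynomial conjugation (conjugate coefficients and substitute $z\mapsto z^{-1}$) is exactly the ``right'' one so that monomials $z^k$ with real coefficients become their inverses $z^{-k}$, which is what makes the constant-gate step work. If one prefers to avoid appealing to the group property, the same proof can be written by directly expanding $(G_t M^{(t-1)}_\Delta)^*(G_t M^{(t-1)}_\Delta) = (M^{(t-1)}_\Delta)^* G_t^* G_t M^{(t-1)}_\Delta = (M^{(t-1)}_\Delta)^* M^{(t-1)}_\Delta = \Id$, where the first equality uses that $(AB)^*=B^*A^*$ in the polynomial matrix ring (another routine check from the definitions).
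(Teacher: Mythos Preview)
Your proof is correct and follows exactly the approach the paper intends: the paper simply remarks that paraunitary matrices form a multiplicative group and that this fact is ``useful for verifying'' the claim, leaving the details implicit. Your induction with the two gate-type checks (rotation and monomial diagonal) is precisely the verification the paper alludes to.
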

\noindent
The following is easy to see from the definitions, using simple induction:
\begin{claim}\label{clm:paraunitaryunitary}
For paraunitary $U[z]$, the evaluation $U[\omega]$ is a (complex) unitary matrix for any $\omega$
a complex number on the complex unit circle. 
\end{claim}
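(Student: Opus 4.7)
The plan is to reduce the paraunitarity identity $U^*U = \Id$ at the level of polynomial matrices to the ordinary unitarity identity $U[\omega]^\dagger U[\omega] = \Id$ (where ${}^\dagger$ denotes the usual conjugate transpose) by showing that, on the unit circle, polynomial conjugation agrees with scalar complex conjugation. The induction hinted at in the statement is really the elementary fact that the evaluation homomorphism $p \mapsto p[\omega]$ commutes with the matrix operations used in the definition of $U^*$.

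First I would settle the scalar case. Let $p[z] = \sum_i a_i z^i$ be a Laurent polynomial. By definition $\bar p[z] = \sum_i \bar{a_i} z^{-i}$. For any $\omega$ with $|\omega|=1$ we have $\omega^{-1} = \bar\omega$, so
\[
\bar p[\omega] \;=\; \sum_i \bar{a_i}\, \omega^{-i} \;=\; \sum_i \bar{a_i}\, \bar\omega^{\,i} \;=\; \overline{\sum_i a_i\, \omega^i} \;=\; \overline{p[\omega]}.
\]
Thus on the unit circle, polynomial conjugation is precisely scalar conjugation of the evaluation.

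Next I would lift this to matrices. Fix $\omega$ on the unit circle and apply the scalar identity entrywise: for every $i,j$,
\[
(U^*)_{i,j}[\omega] \;=\; \overline{U_{j,i}}[\omega] \;=\; \overline{U_{j,i}[\omega]} \;=\; \bigl(U[\omega]\bigr)^{\dagger}_{i,j},
\]
so $U^*[\omega] = U[\omega]^{\dagger}$. Since matrix multiplication of polynomial matrices is compatible with entrywise evaluation (each entry of a product is a fixed polynomial expression in the entries of the factors), evaluating the paraunitarity identity $U^*[z]\, U[z] = \Id$ at $z=\omega$ yields
\[
U[\omega]^{\dagger}\, U[\omega] \;=\; U^*[\omega]\, U[\omega] \;=\; \Id,
\]
which is exactly the statement that $U[\omega]$ is unitary.

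There is no real obstacle here; the only thing one must be mindful of is that the Laurent variable may appear with negative exponents, which is precisely why the assumption $|\omega|=1$ is needed (so that $\omega^{-1}=\bar\omega$ makes the substitution $z \mapsto z^{-1}$ in the definition of $\bar p$ coincide with coefficientwise conjugation of $p[\omega]$). For $\omega$ off the unit circle the two operations diverge and the claim fails in general.
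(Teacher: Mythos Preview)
Your argument is correct. The paper itself does not spell out a proof of this claim; it merely remarks that it ``is easy to see from the definitions, using simple induction,'' so your write-up is in fact more detailed than what the paper provides. Your key observation---that for $|\omega|=1$ one has $\omega^{-1}=\bar\omega$, so that the polynomial conjugation $\bar p$ evaluated at $\omega$ coincides with the ordinary complex conjugate $\overline{p[\omega]}$, and hence $U^*[\omega]=U[\omega]^{\dagger}$---is exactly the content behind the paper's hint, and the rest is just the evaluation homomorphism applied to the identity $U^*U=\Id$. There is nothing to correct.
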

 This will be used below.
We now remind the reader of the   definition of a  $\kappa$-well conditioned algorithm \cite{DBLP:conf/icalp/Ailon15,DBLP:journals/toct/Ailon16}.  In this work, we will refer to
this standard notion of well conditionedness  \emph{geometric}, because it is related to geometric stretching and shrinking
of vectors under linear operators.
\begin{definition}
The (geometric) condition number of a  nonsingular complex matrix $M$ is \\ 
$\|M\|\cdot\|M^{-1}\|$.
A nonsingular  matrix $M$ is (geometrically) $\kappa$-well conditioned if  its (geometric) condition number is
at most $\kappa$.  Otherwise it is (geometrically) $\kappa$-ill conditioned.
The algorithm $\A=(\Id,M^{(1)}\dots M^{(m)})$ is (geometrically) $\kappa$-well conditioned  if $M^{(t)}$ is (geometrically) well conditioned for all $t=0\dots m$.  Otherwise, it is (geometrically) $\kappa$-ill conditioned.  The (geometric) condition number of $\A$ is the smallest $\kappa$ such that $\A$ is (geometrically) $\kappa$-well conditioned.
\end{definition}

We define a different, though not unrelated (as we shall see below) notion of condition number of an algorithm.  We first define it for $\Delta$-integral algorithms, and then extend to general algorithms.
\begin{definition}\label{defn:algwellcond}
A $\Delta$-integral algorithm $\A$ is $\Delta$-\emph{algebraically} $\kappa$-well conditioned if for all $t=1\dots m$, 
$ \Delta^{\val(M^{(t)}_\Delta)-\deg(M^{(t)}_\Delta)} \leq \kappa$.
Otherwise, it is $\Delta$-algebraically $\kappa$-ill conditioned.
\end{definition}
Although the definition seems to depend on it, the parameter $\Delta$ is immaterial for defining
(and thinking about) algebraic condition number. It is trivial to see that if $\A$ is integral with respect 
to both $\Delta$ and $\Delta'$ and is $\Delta$-algebraically $\kappa$-well conditioned, 
then it is also  $\Delta'$-algebraically $\kappa$-well conditioned.  
  In fact, even if an algorithm is integral with respect to \emph{no} $\Delta$, 
we can extend the definition by using limits.  Being algebraically $\kappa$-well conditioned is henceforth
a property of any algorithm $\A$, regardless of integrality.  The precise definitions are given in Section~\ref{sec:def:algcond}, for completeness.  We   henceforth define:
\begin{definition}\label{defn:algcondnum}
The algebraic condition number of $\A$ is the smallest $\kappa$ such that $\A$ is algebraically $\kappa$-well conditioned.
\end{definition}

If we consider some intermediate matrix $M^{(t)}$ given rise to by our  algorithm $\A$,
then algebraic  well conditioning cannot be determined from $M^{(t)}$ `in vitro', because it depends
on \emph{how} we reached $M^{(t)}$ from the initial state $\Id$.  This is in contrast with the notion of (geometric) condition, which can be determined from a single matrix.
Nevertheless, the  following lemma tells us that the two notions are quantitatively related.
\begin{lemma}\label{lem:condition}
If an algorithm $\A$ is algebraically $\kappa$-well conditioned, then it is
geometrically $\kappa$-well conditioned.  There exist algorithms for which the two
notions of condition number coincide.
\end{lemma}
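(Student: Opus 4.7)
The plan is to combine the paraunitarity of $M^{(t)}_\Delta$ (Claim~\ref{clm:allparaunitary}) with a maximum-modulus argument, exploiting the fact that $\Delta\in(0,1]$ lies inside the unit disk on whose boundary every paraunitary evaluation is unitary. I will first assume $\A$ is integral with respect to some $2/3\leq\Delta\leq 1$; the general case then follows from the limit extension in Appendix~\ref{sec:def:algcond}. Fix a step $t$, write $v=\val(M^{(t)}_\Delta)$ and $d=\deg(M^{(t)}_\Delta)$, so that the algebraic condition at step $t$ is $\Delta^{v-d}$. The target will be the two bounds $\|M^{(t)}\|\leq\Delta^{v}$ and $\|(M^{(t)})^{-1}\|\leq\Delta^{-d}$, whose product is exactly $\Delta^{v-d}$.

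For the first bound, I will shift the Laurent polynomial matrix to a proper one: set $G[z]=z^{-v}M^{(t)}_\Delta[z]$, whose entries are now proper polynomials and therefore entire. The key point is that for any fixed unit vector $x\in\C^n$, the scalar function $z\mapsto\|G[z]x\|^2=\sum_j|(G[z]x)_j|^2$ is a finite sum of squared moduli of entire functions and hence subharmonic on $\C$. The maximum principle on the closed unit disk, which contains $\Delta$, gives $\|G[\Delta]x\|^2\leq\max_{|z|=1}\|G[z]x\|^2$. On the unit circle, $|z^{-v}|=1$ and $M^{(t)}_\Delta[z]$ is unitary by Claim~\ref{clm:paraunitaryunitary}, so that maximum equals $\|x\|^2=1$. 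Taking the supremum over $x$ yields $\|G[\Delta]\|\leq 1$, equivalently $\|M^{(t)}\|\leq\Delta^{v}$. For the inverse, paraunitarity is the polynomial identity $(M^{(t)}_\Delta)^*M^{(t)}_\Delta=\Id$, so evaluated at $z=\Delta$ it yields $(M^{(t)})^{-1}=(M^{(t)}_\Delta)^*[\Delta]$, and the involution inside $\,\cdot^*\,$ flips valuation and degree to $-d$ and $-v$ respectively. Repeating the subharmonicity argument on $H[z]=z^{d}(M^{(t)}_\Delta)^*[z]$ gives $\|(M^{(t)})^{-1}\|\leq\Delta^{-d}$, completing the first assertion.

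For the coincidence claim, the simplest nontrivial algorithm already works: let $\A$ consist of a single constant gate that multiplies one coordinate by $c>0$ with $c\neq 1$. Then $M^{(1)}=\diag(1,\dots,c,\dots,1)$ has geometric condition $\max(c,c^{-1})$, while $M^{(1)}_\Delta$ is the same matrix with $c$ replaced by $z^{\log_\Delta c}$, so $\val-\deg=-|\log_\Delta c|$ and the algebraic condition equals $\Delta^{-|\log_\Delta c|}=\max(c,c^{-1})$. The one delicate point in the whole argument is formulating the maximum-modulus step for the \emph{operator} norm rather than entrywise: a naive bound on each $|(G[z])_{ij}|$ would cost a factor of $\sqrt{n}$, and the operator norm $\|G[z]\|$ itself need not be subharmonic. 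The decisive device is subharmonicity of $\|G[z]x\|^2$ for each fixed $x$, which yields the tight factors $\Delta^{v}$ and $\Delta^{-d}$ without any dimension-dependent losses.
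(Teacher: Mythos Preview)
Your proof is correct and follows the same skeleton as the paper's: shift $M^{(t)}_\Delta$ by $z^{-v}$ (respectively $(M^{(t)}_\Delta)^*$ by $z^{d}$) to obtain a proper polynomial matrix, apply a maximum-modulus argument on the closed unit disk, and use paraunitarity on the unit circle (Claims~\ref{clm:allparaunitary} and~\ref{clm:paraunitaryunitary}) to conclude $\|M^{(t)}\|\leq\Delta^{v}$ and $\|(M^{(t)})^{-1}\|\leq\Delta^{-d}$.

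The one genuine difference is how the maximum-modulus step is executed for the operator norm. The paper invokes the weak maximum modulus theorem for Banach-space-valued holomorphic functions (Dunford--Schwartz) applied directly to $z\mapsto G[z]$ with the spectral norm. You instead fix a unit vector $x$, observe that $z\mapsto\|G[z]x\|^2$ is a finite sum of $|f_j(z)|^2$ with each $f_j$ entire and hence subharmonic, apply the scalar subharmonic maximum principle, and only then take the supremum over $x$. This is a legitimate and more elementary route: it needs only standard scalar subharmonicity rather than the vector-valued theorem, and it sidesteps exactly the issue you flag (that $\|G[z]\|$ itself is not obviously subharmonic). Your tightness example with a single constant gate is also slightly simpler than the paper's two-gate example; both are valid.
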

The proof of this connection requires some complex analysis, and in particular
the maximum modulus theorem.
We provide it here only for the case $\A$ is integral with respect to some $\Delta$.  The extension to general $\A$ is not difficult using limit arguments, and we present it in Section~\ref{sec:lemgen}.
\begin{proof}
Assume $\A$ is integral with respect to $\frac 2 3 \leq \Delta<1$, and fix $t\in[m]$.  
Let $v = \val(M^{(t)}_\Delta)$ and $d = \deg(M^{(t)}_\Delta)$.
By definition of $\val$ and $\deg$, the polynomial matrix $M^v[z] := z^{-v} M^{(t)}_\Delta$
is a proper polynomial matrix (with no negative degree monomials).  Therefore, it is an entire
(matrix valued) function of $z$, in the complex sense.  By the (weak) maximum modulus theorem for Banach
space valued functions on the complex plane (\cite{DunfordS58}, page 230), we conclude that 
$ \|M^v[\Delta]\| \leq \max_{\omega\in \C : |\omega|=1} \left \| M^v[\omega] \right \|\ ,$
because $\Delta$ is contained in the open unit disk.
The right hand side is identically $1$ by Claims~\ref{clm:allparaunitary} and~\ref{clm:paraunitaryunitary}.  The left hand side equals exactly $\Delta^{-v} \|M^{(t)}_\Delta[\Delta]\| = \Delta^{-v} \|M^{(t)}\|$.  Therefore,
\begin{equation}\label{eq:normineq1}
\|M^{(t)}\| \leq \Delta^{v}\ .
\end{equation}

\noindent
By Claim~\ref{clm:allparaunitary} and Definition~\ref{dfn:paraunitary}, the inverse of the matrix $M^{(t)}_\Delta$ is $(M^{(t)}_\Delta)^*$.\footnote{The matrix $M^{(t)}_\Delta$ is real, and hence its transpose  equals its conjugate-transpose.  We prefer to keep the complex notation $(\cdot)^*$.)}
    Let $$M^d := z^{d} (M^{(t)}_\Delta)^*\ .$$
By construction, the polynomial matrix $M^d$ is a proper polynomial, and hence again by
the (weak) maximum modulus theorem,
$\Delta^{d} \|   (M^{(t)}_\Delta)^*[\Delta] \| = \|M^d[\Delta]\| \leq \max_{\omega\in \C : |\omega|=1} \left \| M^d[\omega] \right \| = 1\ .$
But by paraunitarity $(M^{(t)}_\Delta)^*$ is the inverse of $M^{(t)}_\Delta$, and this holds
in particular for $z=\Delta$.  Hence the last inequality gives $\Delta^{d} \|(M^{(t)})^{-1}\| \leq 1$ or:
\begin{equation}\label{eq:normineq2} \|(M^{(t)})^{-1}\| \leq \Delta^{-d}\ .
\end{equation}
Combining (\ref{eq:normineq1}) with (\ref{eq:normineq2}), we get
$ \|(M^{(t)})^{-1}\|\cdot  \|M^{(t)}\| \leq \Delta^{v}\cdot \Delta^{-d} \leq \kappa\ , $
where the rightmost inequality is by the lemma's assumption.
This concludes the first part of the result.

To show tightness, it is enough to  work with $2\times 2$ matrices.   Consider an algorithm $\A$ performing
two steps.  The first multiplies the first coordinate by a  positive number $\Delta$, and the second multiplies the
second coordinate by $\Delta^{-1}$.  It is easy to see that both the algebraic and the standard notions
of condition number coincide for this case.
\end{proof}

\noindent
We are now ready to state our main result.

\begin{theorem}\label{thm:main}
Assume algorithm $\A=(M^{(0)},\dots, M^{(m)})$ computes $F$, 
and is algebraically $\kappa$-well conditioned.  Then 
\begin{equation}\label{gugugu} m=\Omega\left(\frac{ n \log n}{\sqrt \kappa}\right)\ .\end{equation}
\end{theorem}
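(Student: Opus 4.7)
My overall plan is to lift Ailon's 2013 row-entropy argument from unitary matrices to the paraunitary polynomial setting by integrating entropy over the unit circle, and then separately exploit $M^{(m)}_\Delta[\Delta]=F$ to lower bound that integrated entropy. Define
\[
\Phi(t) \;=\; \frac{1}{2\pi}\int_0^{2\pi} H\!\bigl(M^{(t)}_\Delta[e^{i\theta}]\bigr)\,d\theta,
\]
where $H(U) = \sum_{i,j} -|U_{ij}|^2 \log |U_{ij}|^2$ is the row-entropy sum, defined on unitary $U$. Claim~\ref{clm:paraunitaryunitary} makes the integrand well defined at every $\theta$, and $\Phi(0)=0$. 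If the $t$th gate is a rotation $R$, then $M^{(t)}_\Delta[\omega] = R\,M^{(t-1)}_\Delta[\omega]$ for every $\omega$ on the unit circle, and the two-row entropy change lemma of \cite{Ailon13} applied pointwise in $\omega$ gives $|\Phi(t)-\Phi(t-1)| \leq C_0$ for some absolute constant $C_0$. If the $t$th gate is a constant gate of value $c$, then the lift multiplies one row by $z^{\log_\Delta c}$; on the unit circle this factor has unit modulus, so row-entry magnitudes and hence $H$ are preserved, giving $\Phi(t)=\Phi(t-1)$. Summing yields $\Phi(m) \leq C_0\,m$.

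The heart of the argument is a lower bound $\Phi(m) = \Omega(n\log n/\sqrt\kappa)$. Every entry of $M^{(m)}_\Delta[z]$ is a Laurent polynomial of support span at most $D := \log_{1/\Delta}\kappa$, and $M^{(m)}_\Delta[\Delta]=F$. My plan is to exhibit an arc $A \subset \{|\omega|=1\}$ of angular length $\Omega(1/\sqrt\kappa)$ on which $|M^{(m)}_\Delta[\omega]_{ij}|^2$ stays within a constant factor of $1/n$ for every $i,j$. Since a probability distribution $p$ on $\{1,\dots,n\}$ with $\|p\|_\infty = O(1/n)$ satisfies $H(p) \geq \log n - O(1)$, the row-entropy sum is $\Omega(n\log n)$ at each $\omega \in A$, and integrating over $A$ then gives $\Phi(m) \geq \Omega(n\log n/\sqrt\kappa)$. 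The parameter $\Delta \in [2/3,1]$ would be tuned to balance the span bound $D$ against the distance $1-\Delta$ from the unit circle.

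The main obstacle is producing this arc. A direct Bernstein-type estimate $|M^{(m)}_\Delta[\omega]_{ij} - F_{ij}| \leq D\,|\omega - \Delta|$ is too weak: it yields only arcs of length $\Theta(1/(D\sqrt n))$, and no choice of $\Delta$ recovers $1/\sqrt\kappa$ from this. The sharper analysis must exploit paraunitarity as a polynomial identity rather than as unitarity at each point---in particular, the shift-orthogonality relations satisfied by the coefficient arrays of the entries, combined with a Poisson/Lagrange reproducing formula expressing $F_{ij}=M^{(m)}_\Delta[\Delta]_{ij}$ as an integral of boundary values against a kernel of $L^2$-mass $\Theta(1/(1-\Delta^2))$. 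The appearance of $\sqrt\kappa$, the geometric mean of $1$ and $\kappa$, suggests the arc length emerges from an $L^2$-versus-$L^\infty$ tradeoff on the boundary function rather than a pointwise estimate. Once the arc is secured, combining $\Phi(m) \geq \Omega(n\log n/\sqrt\kappa)$ with $\Phi(m) \leq C_0\,m$ yields~(\ref{gugugu}).
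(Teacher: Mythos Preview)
Your potential $\Phi(t)=\frac{1}{2\pi}\int_0^{2\pi}H\bigl(M^{(t)}_\Delta[e^{i\theta}]\bigr)\,d\theta$ and the step-change bounds are fine: constant gates act by unimodular row multipliers on the circle and leave $H$ untouched, and the $O(1)$ rotation bound from \cite{Ailon13} integrates to $O(1)$. The entire difficulty is the lower bound $\Phi(m)=\Omega(n\log n/\sqrt\kappa)$, and there the proposal is a plan, not a proof.

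The arc claim, as you state it, asks that \emph{every} entry satisfy $|M^{(m)}_\Delta[\omega]_{ij}|^2=\Theta(1/n)$ on a common arc of length $\Omega(1/\sqrt\kappa)$. You only know three things about $M:=M^{(m)}_\Delta$: it is paraunitary, has degree span at most $D=\log_{1/\Delta}\kappa$, and $M[\Delta]=F$. Nothing in these constraints forces the boundary matrix $M[e^{i\theta}]$ to be entrywise flat anywhere; paraunitarity only pins down row (and column) $\ell_2$ norms on the circle, not individual moduli. Your own Bernstein calculation already shows that pointwise proximity to $F$ is hopeless, and the proposed cure---``tune $\Delta$''---cannot help: since $D=\log_{1/\Delta}\kappa\approx(\log\kappa)/(1-\Delta)$, the product $D\cdot(1-\Delta)$ is essentially $\log\kappa$ regardless of $\Delta$, so the tradeoff you want to balance is in fact rigid. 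The Poisson-kernel remark goes in the wrong direction: it expresses the \emph{interior} value $F_{ij}=M[\Delta]_{ij}$ as an average of boundary values, which constrains the boundary only in an $L^1$/$L^2$ sense, not on any arc, and certainly not uniformly over all $n^2$ entries simultaneously. Without the arc, the chain $\Phi(m)\le C_0 m$ yields nothing.

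The paper sidesteps the boundary question entirely. Instead of looking at $M^{(m)}_\Delta$ on $|z|=1$, it works in coefficient space and uses two \emph{preconditioners}. The first, $A[z]=\Id\cdot\sum_{k=0}^{\rho+\ell}\Delta^{k}z^{-k}$, is a truncated geometric series chosen so that convolving with $M^{(m)}_\Delta$ manufactures, purely algebraically, $\ell+1$ consecutive coefficients each equal to $\Theta(\Delta^{\rho})F=\Theta(\kappa^{-1/2})F$; this is where the $\sqrt\kappa$ actually enters. The second, $B=(M^{(m)}_\Delta)^*\sum_{i=\rho}^{\rho+\ell}z^{-i}F$, is engineered so that the matching coefficients of $M^{(m)}_\Delta B$ are exactly $F$. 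One then tracks the preconditioned quasi-entropy $\Phi_{A,B}(M^{(t)}_\Delta)=\sum_{k}\sum_{i,j}h\bigl(\operatorname{coeff}_{k}(M^{(t)}_\Delta A)_{ij}\cdot\overline{\operatorname{coeff}_{k}(M^{(t)}_\Delta B)_{ij}}\bigr)$: it starts at $o(\ell n\log n/\sqrt\kappa)$, ends at $\Omega(\ell n\log n/\sqrt\kappa)$, and changes by at most $O(\|MA\|_{2,\infty}\|MB\|_{2,\infty})=O(\ell)$ per step, with the row-norm bounds coming from paraunitarity via the Parseval identity~(\ref{parseval}). Dividing gives $m=\Omega(n\log n/\sqrt\kappa)$. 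No statement about $M^{(m)}_\Delta$ on the unit circle beyond unitarity is ever needed.
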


The proof is presented in the remainder of the paper, for the case $\A$ is integral with respect to some
$\frac 2 3 \leq \Delta \leq1$.   
The general case,  using limit arguments, is deferred to Section~\ref{sec:thmgen}.
But first,

\subsection{ Algebraic-Condition Number: A Novel Computational Lens}
Geometric condition number is a standard matrix invariant that appears in classic textbooks on matrix theory,
and is a useful invariant when analyzing convergence rates of certain numerical algorithms on matrices.
It is important to discuss whether the bound of Theorem~\ref{thm:main} is interesting under the non-standard
algebraic condition number notion.  Admittedly, we were not able to prove a version of Theorem~\ref{thm:main}
with `geometrically $\kappa$-well conditioned' replacing `algebraically $\kappa$-well conditioned'.  However, we
have good reason to believe that there is a chance of proving an algebraic version of Conjecture~\ref{conj1}, in which
the denominator of (\ref{gugugu}) is logarithmic in $\kappa$
(see Conjecture~\ref{conj2} below for an exact statement).
Evidence for this is presented in Section~\ref{evidence}. 
Proof of Conjecture~\ref{conj2} would be an important step toward understanding the complexity of computing 
the Fourier transform (and perhaps other problems).
Additionally, proof of the algebraic conjecture would
 highlight the importance of its (stronger) geometric counterpart.

One issue is, that it may seem unnatural to introduce polynomials for a problem involving only real arithmetic.
But there is a simple alternative way to think about algebraic condition number that does not involve polynomials: 
If we think of the constant gates as elements in the multiplicative group $\R_{>0}$, and the coefficients of the rotation and reflection
operations as elements of the field $\R$, then the machine state at each step $t$ is now a vector with coefficients
in the group algebra $\R[\R_{>0}]$. Accordingly, the matrices $M^{(t)}$ are also defined over $\R[\R_{>0}]$.  The algebraic
condition number of a matrix is now the ratio between the largest group element $c\in \R_{>0}$ with nonzero coefficient
in some matrix element, and the smallest one.  Intuitively, a large  (resp. small) group element  $c\in \R_{>0}$ occurs in the computation
as a result of a chain of constant gates, the product of which is $c$, that affect  some direction in the input $n$-dimensional vector space.

Note that numbers that are introduced in the computation from  rotations (trigonometric functions of rotation angles), however miniscule,
do not affect the 
algebraic condition number, because they are used as coefficients in the group algebra $\R$, and not as elements in the group $\R_{>0}$. This property
is shared with geometric condition number, which is also not affected by rotation action.


\subsection{Polynomial Matrices: Norms and  Entropy}

We will need to define a norm for polynomials, polynomial vectors and polynomial matrices, over the
complex field.  
For a polynomial $p=p[z] \in \C[z]$, we define 
$ \|p\| = \sqrt{\sum_k |\coeff{p}{k}|^2}\ .$
The following is well known from harmonic analysis:
\begin{equation}\label{parseval} \|p\| = \sqrt{\int_{0}^1 \left |p\left [  e^{2\pi \iota t}\right ] \right |^2 dt}\ .
\end{equation}
For a polynomial vector $v=v[z] \in \C[z]^n$ we define
\begin{equation}\label{defpolynorm} \|v\| = \sqrt{\sum_{i=1}^n \| v_i\|^2}\ ,\end{equation}
where we remind the reader that $v_i$ is the $i$'th coordinate of $v$.
Finally, for a polynomial matrix $A$, we define $\|A\|_\mynorm$ as shorthand for $\max_i \|A_{i,:}\|$.
Let $M$ denote a real paraunitary matrix. Then we extend the definition of the quasi-entropy potential from \cite{DBLP:journals/toct/Ailon16,DBLP:conf/icalp/Ailon15} as follows:
\begin{equation}\label{def:entropy} \Phi(M) = \sum_k \sum_{i,j}\ h\left (\left | \coeff{M_{i,j}}{k}\right |^2 \right ) 
\end{equation}
where $h : \C \mapsto \C$ is defined by $h(u) = -u\log|u|$.  (Note:  No, this is not a mistake, and we do not want to define $h$ as $-|u|\log|u|$.  
We will be using $h$ for possibly negative and even complex values of $u$, where this makes a difference.\footnote{This comment will be removed from published versions.})
For the Fourier transform $F$, 
\begin{equation}\label{fourierentropy} \Phi(F) = \Theta(n\log n)\ .\end{equation}

We also define a preconditioned version of $\Phi$ for any paraunitary  matrix $M$, parameterized
by two polynomial matrices $A,B$:
\begin{equation}\label{yokyok} \Phi_{A,B}(M) = \sum_k\sum_{i,j}  h\left ( \coeff{(MA)_{i,j}}{k}\overline{\coeff{(MB)_{i,j}}{k}}\right ) .\end{equation}
where the outer sum is over $k$ for which the internal sum is not null.
Note that $\Phi_{\Id, \Id}(M) = \Phi(M)$.
The general idea of a preconditioned quasi-entropy potential was developed in \cite{DBLP:conf/icalp/Ailon15}.  The  idea there (and here) is to carefully choose fixed $A$ and $B$, and
to track the potential along the evolution of $M$ under algorithm steps.
The following key lemma bounds the absolute value of the change in entropy of a paraunitary matrix,
incurred by multiplication of rows by monomials (corresponding to constant multiplication in the original algorithm) and by planar rotations (corresponding to rotations in the original algorithm).
\begin{lemma}\label{lem:entropychange}
(1) Let $M[z]$ be a paraunitary matrix, and let $A[z],B[z]$ be some complex matrices.
Let $M'[z]$ be a paraunitary matrix obtained  from $M[z]$ by multiplication by a diagonal
matrix with monomials on the diagonal, that is
$ M'[z] = \diag(z^{i_1},\dots, z^{i_n}) M[z]\ .$
Then $\Phi_{A,B}(M') = \Phi_{A,B}(M)\ .$
(2)If $M'$ is obtained from $M$ by a planar rotation action, then
$$ \left | \Phi_{A,B}(M') - \Phi_{A,B}(M)\right| = O\left ( \|{MA}\|_\mynorm\cdot \|{MB}\|_\mynorm\right )\ .$$
\end{lemma}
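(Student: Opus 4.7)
For part (1), I would proceed by a direct re-indexing. Left-multiplying $M[z]$ by $\diag(z^{i_1},\dots, z^{i_n})$ multiplies the $r$-th row of both $MA$ and $MB$ by $z^{i_r}$, so $\coeff{(M'A)_{r,j}}{k} = \coeff{(MA)_{r,j}}{k-i_r}$ and similarly for $B$. Substituting into the definition of $\Phi_{A,B}$ and shifting the inner summation index $k \mapsto k + i_r$ row by row shows immediately that $\Phi_{A,B}(M') = \Phi_{A,B}(M)$.

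For part (2), a planar rotation touches only two rows $p,q$ of $M$, and hence exactly those two rows of $MA$ and $MB$ are transformed by the same $2\times 2$ orthogonal transformation. So $\Phi_{A,B}(M') - \Phi_{A,B}(M)$ decomposes as a sum over $(j,k)$ of local contributions $\Delta_{j,k}$ depending only on the four scalars $\alpha = \coeff{(MA)_{p,j}}{k}$, $\alpha' = \coeff{(MA)_{q,j}}{k}$, $\beta = \coeff{(MB)_{p,j}}{k}$, $\beta' = \coeff{(MB)_{q,j}}{k}$. My plan is to establish the per-$(j,k)$ bound
\[
|\Delta_{j,k}| = O\bigl((|\alpha|+|\alpha'|)(|\beta|+|\beta'|)\bigr)
\]
and then finish by expanding this product and applying Cauchy--Schwarz to each of the four resulting cross-sums $\sum_{j,k}|\alpha_{j,k}||\beta_{j,k}|$ and its three analogues. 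Each such cross-sum is bounded by $\|(MA)_{r,:}\|\cdot\|(MB)_{r',:}\| \leq \|MA\|_\mynorm\cdot\|MB\|_\mynorm$ for appropriate $r, r' \in \{p,q\}$, yielding the claimed asymptotic.

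To establish the local bound, I would parameterize by the rotation angle $\theta \in [0,\theta_0]$, with $c = \cos\theta, s = \sin\theta$, and define $u(\theta) = (c\alpha - s\alpha')\overline{(c\beta - s\beta')}$, $v(\theta) = (s\alpha + c\alpha')\overline{(s\beta + c\beta')}$, and $F(\theta) = h(u(\theta)) + h(v(\theta))$. A direct expansion yields the pivotal invariance $u(\theta) + v(\theta) \equiv \alpha\bar\beta + \alpha'\bar\beta'$, so that $\dot v = -\dot u$. Using Wirtinger calculus on $h(u) = -u\log|u|$ one finds $F'(\theta) = (\log|v| - \log|u|)\dot u + E(\theta)$ with $|E(\theta)| \leq |\dot u(\theta)|$. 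The main obstacle is the logarithmic singularity of $h'$ at the origin, which rules out any naive per-$\theta$ bound on $F'$. The saving observation is that the cancellation $\dot v = -\dot u$ makes the singular piece integrate in closed form to $\bigl[u\log|u|-u+v\log|v|-v\bigr]_0^{\theta_0}$, a function continuous across zero whose magnitude is $O((|\alpha|+|\alpha'|)(|\beta|+|\beta'|))$ on the relevant range, giving the desired local estimate.
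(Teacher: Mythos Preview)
Your treatment of part~(1) is fine and matches what the paper dismisses as ``trivial from the definitions.'' Your overall plan for part~(2) --- reduce to a per--$(j,k)$ local bound and then sum via Cauchy--Schwarz --- is also exactly the paper's architecture. The gap is in your proof of the local bound
\[
|\Delta_{j,k}| \;=\; O\bigl((|\alpha|+|\alpha'|)(|\beta|+|\beta'|)\bigr).
\]
Write $P(\theta)=u\log|u|-u+v\log|v|-v$. Since $u+v$ is constant, $P(\theta)=-F(\theta)-(u+v)$, hence
\[
\bigl[P(\theta)\bigr]_0^{\theta_0} \;=\; -\bigl[F(\theta)\bigr]_0^{\theta_0} \;=\; -\Delta_{j,k}.
\]
So the ``closed form'' you write down for the singular integral is, up to sign, $\Delta_{j,k}$ itself. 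Asserting that its magnitude is $O\bigl((|\alpha|+|\alpha'|)(|\beta|+|\beta'|)\bigr)$ is therefore exactly the statement you are trying to prove, and the argument is circular. Equivalently, the function $w\mapsto w\log|w|$ is continuous at $0$ but is \emph{not} $O(|w|)$: for $|w|\le R^2$ one only has $|w\log|w||\le R^2|\log R^2|$, which blows up relative to $R^2$ as $R\to 0$. (A secondary issue: the antiderivative identity $\frac{d}{d\theta}(u\log|u|-u)=\dot u\log|u|$ is valid for real $u$ but fails for complex $u$; differentiating $P$ actually gives $-S-E$, not $S$, so $\int S = \Delta_{j,k}-\int E$, which is again a tautology when plugged back into $\Delta_{j,k}=\int S+\int E$.)

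The paper obtains the local bound by a different and cleaner device. With
\[
\rho_{k,j}=\sqrt{|\alpha|^2+|\alpha'|^2},\qquad \sigma_{k,j}=\sqrt{|\beta|^2+|\beta'|^2},
\]
it writes
\[
h(\alpha\bar\beta)\;=\;-\alpha\bar\beta\,\log(\rho_{k,j}\sigma_{k,j})\;+\;\rho_{k,j}\sigma_{k,j}\,h\!\left(\frac{\alpha\bar\beta}{\rho_{k,j}\sigma_{k,j}}\right),
\]
and likewise for the other three terms. The first pieces cancel in $\Delta_{j,k}$ because the rotation preserves the inner product $\alpha\bar\beta+\alpha'\bar\beta'$; the second pieces have arguments of modulus at most $1$, where $|h|$ is bounded by $1/e$, so each contributes $O(\rho_{k,j}\sigma_{k,j})$. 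Summing $\sum_{k,j}\rho_{k,j}\sigma_{k,j}$ via Cauchy--Schwarz then gives exactly $O(\|MA\|_{2,\infty}\|MB\|_{2,\infty})$. This normalization step is the missing idea in your attempt.
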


\begin{proof}
Part (1) is trivial, from the definitions.  
For part (2), polynomial matrix potential function is, equivalently, a scalar matrix potential function (defined in \cite{DBLP:journals/toct/Ailon16}) if we replace each polynomial by a row-vector of its coefficients. 
Therefore, using the potential change bound  theorem of \cite{DBLP:journals/toct/Ailon16}, we  have the
statement of the theorem.  For self containment, we also provide a full proof in  Section~\ref{sec:proof:lem:entropychange}
in the appendix.
\end{proof}

\section{The preconditioners}\label{pppppp}

We now define our two polynomial matrix preconditioners, $A[z]$ and $B[z]$, that will be used to define
a potential for proving the main theorem.
First some extra notation: $\rho := \frac{\log \kappa}{2\log \Delta^{-1}}$, or $\Delta^{-\rho} = \sqrt \kappa$.  
Using the notation, we get that for all $t=0,\dots, m$, $ \deg(M^{(t)}_\Delta) - \val(M^{(t)}_\Delta) \leq 2\rho\ .$
This implies that either
\begin{enumerate}
\item[(i)]
$ \val(M^{(t)}_\Delta) \geq -\rho $, or
\item[(ii)]
$\deg(M^{(t)}_\Delta) \leq \rho$.
\end{enumerate}

We will assume the case (i) (a symmetric construction can be done for the case (ii), and hence there is
no loss of generality.)
We will also make the implicit assumption that
$\kappa=n^{o(1)}$, because otherwise the statement of the theorem is obvious from the trivial observation
that computation of $F$ requires linearly many steps.
For convenience, we will define $\mu=1-\Delta$, and 
let $\ell$ be the smallest integer such that $\Delta^\ell \leq 1/2$.  By our assumptions on
$\Delta$, this also implies $\Delta^\ell \geq 1/4$, hence $\Delta^\ell = \Theta(1)$.
Also note that
\begin{equation}\label{eq:ellapprox}
\ell = \Theta(1/\mu)\ .
\end{equation}

Let $d$ be $\deg(M^{(m)}_\Delta)$.  The preconditioner $A$ is defined as
$$A[z] = \Id(1+\Delta z^{-1} + \Delta^{2}z^{-2} +  \Delta^{3}z^{-3}+\cdots + \Delta^{\rho + d+1+\ell}z^{-\rho-d-1-\ell})\ .$$
Consider now the polynomial matrix $P[z] = M_\Delta^{(m)}[z]\cdot A[z]$.  
\begin{claim}\label{popopopopo}
 The (matrix) coefficient
of $P[z]$ corresponding to $z^{-\rho-i}$ is exactly $F\Delta^{i+\rho}$ for all $i$ in the range $R := \{0,1,\dots, \ell\}$.
\end{claim}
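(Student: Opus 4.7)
The plan is a straightforward coefficient computation.  Write $M_\Delta^{(m)}[z]=\sum_k C_k z^k$ with $C_k:=\coeff{M_\Delta^{(m)}}{k}$, and recall $A[z]=\Id\sum_{j=0}^{\rho+\ell}\Delta^j z^{-j}$.  Multiplying and collecting the $z^{-i}$ coefficient via the exponent equation $k-j=-i$, then re-indexing $s=j-i$ and factoring out $\Delta^i$, yields
\[\coeff{P}{-i}\;=\;\sum_{j=0}^{\rho+\ell}\Delta^j\,C_{j-i}\;=\;\Delta^i\sum_{s=-i}^{\rho+\ell-i}\Delta^s\,C_s\ .\]

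Next I would invoke the crucial identity $\sum_s\Delta^s C_s = M_\Delta^{(m)}[\Delta]=M^{(m)}=F$.  Thus the claim is equivalent to the assertion that, for every $i\in R$, the right-hand sum is left unchanged by the truncation, i.e.\ $C_s=0$ whenever $s\notin[-i,\rho+\ell-i]$.  The binding cases are the two endpoints of $R$: $i=\rho$ demands $\val(M_\Delta^{(m)})\ge -\rho$, while $i=\rho+\ell$ demands $\deg(M_\Delta^{(m)})\le 0$.  So the whole claim reduces to showing that the coefficient support of $M_\Delta^{(m)}$ lies in $[-\rho,0]$.

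For this centering I would combine two ingredients already in hand.  First, algebraic $\kappa$-well-conditioning together with $\Delta^{-\rho}=\sqrt\kappa$ translates into the spread bound $\deg(M_\Delta^{(m)})-\val(M_\Delta^{(m)})\le 2\rho$.  Second, the unitarity of $F=M_\Delta^{(m)}[\Delta]$ combined with the paraunitarity of $M_\Delta^{(m)}$ (Claim~\ref{clm:allparaunitary}), via the maximum-modulus argument used inside the proof of Lemma~\ref{lem:condition} applied to $z^{-\val(M_\Delta^{(m)})}M_\Delta^{(m)}$ and to $z^{\deg(M_\Delta^{(m)})}(M_\Delta^{(m)})^*$, pins $\val(M_\Delta^{(m)})\le 0\le \deg(M_\Delta^{(m)})$.

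The main obstacle is closing the last gap: these two ingredients alone only localise the support to a window of length $2\rho$ straddling $0$, whereas the claim requires the sharper window $[-\rho,0]$ of length only $\rho$.  I expect this factor of two to be absorbed by a WLOG normalisation of $\A$ --- for instance, by arranging that every constant gate's value $c$ satisfies $c\ge 1$, so each monomial $z^{\log_\Delta c}$ carries a nonpositive power and $\deg(M_\Delta^{(m)})=0$ holds exactly --- together with a matching refinement either of the preconditioner $A$, whose length $\rho+\ell$ was evidently chosen with precisely this slack in mind, or of the paraunitary identity $(M_\Delta^{(m)})^*M_\Delta^{(m)}=\Id$ coupling $\val$ and $\deg$ symmetrically.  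Once the coefficient support is confined to $[-\rho,0]$, the truncated sum coincides with the full $M_\Delta^{(m)}[\Delta]=F$, and $\coeff{P}{-i}=F\Delta^i$ for every $i\in R$ follows immediately from the display above.
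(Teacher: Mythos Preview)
Your coefficient computation is exactly the paper's: it writes
\[
\coeff{P}{-i}=\sum_{k=-\rho-\ell}^{0}\coeff{M_\Delta^{(m)}}{-i-k}\,\coeff{A}{k}
=\sum_{k=-\rho-\ell}^{0}\Delta^{-k}\,\coeff{M_\Delta^{(m)}}{-i-k}
=\Delta^{i}\,M_\Delta^{(m)}[\Delta]=\Delta^{i}F
\]
and stops there. The step you flag --- passing from the truncated sum to the full evaluation $M_\Delta^{(m)}[\Delta]$ --- is simply asserted; the paper offers no argument for why every nonzero coefficient of $M_\Delta^{(m)}$ is picked up, i.e.\ why its support lies inside $[-i,\rho+\ell-i]$ for each $i\in R$, let alone inside $[-\rho,0]$.

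So you have not missed an argument present in the paper: the paper's own proof carries the very gap you isolate. Your diagnosis is accurate --- algebraic $\kappa$-well-conditioning bounds only the spread $\deg-\val\le 2\rho$, while unitarity of $F$ together with the maximum-modulus bounds from Lemma~\ref{lem:condition} pins $\val\le 0\le\deg$, and these two facts do \emph{not} force $\deg\le 0$ and $\val\ge -\rho$ as the word ``exactly'' demands. Your instinct about a WLOG normalisation (equivalently, shifting the exponent window of $A$ so that it starts at $-\deg(M_\Delta^{(m)})$ rather than at $0$, and sliding $B$ and $R$ correspondingly) is the natural patch; the paper does not carry it out. Downstream the argument only uses $\coeff{P}{-i}=\Theta(\kappa^{-1/2})F$, so the main theorem survives such a fix, but the claim as literally stated is not established by the paper's three-line proof.
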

\begin{proof}
By polynomial multiplication definition, $\coeff{P}{-\rho-i}$ is
\begin{eqnarray}\sum_{k=-\rho-d-1-\ell}^0 \coeff{M_\Delta^{(m)}}{-\rho-i-k}\coeff{A}{k}
&=& \sum_{k=-\rho-d-1-\ell}^0 \Delta^{-k} \coeff{M_\Delta^{(m)}}{-\rho-i-k} \nonumber \\
= \Delta^{i+\rho}\sum_{k=-\rho-d-1-\ell}^0 \Delta^{-\rho-i-k} \coeff{M_\Delta^{(m)}}{-\rho-i-k}
&=&  \Delta^{i+\rho} M^{(m)}_\Delta[\Delta] =  \Delta^{i+\rho}  F\ ,
\end{eqnarray}
as required.
\end{proof}

Remember that for our construction, for all $i$ in the range $[0,\ell]$, $\Delta^{i+\rho} =  \Theta(\sqrt \kappa)$.  
We now define the other preconditioner, $B[z]$ as
$ B[z] := (M_{\Delta}^{(m)})^* \sum_{i=-\rho-\ell}^{-\rho} z^{i}F\ .$

We make a short note to demystify and provide intuition for our choice of  the two preconditioners.
 By the definition of $B$,
$M_\Delta^{(m)} B = \sum_{i=-\rho-\ell}^{-\rho} z^{i}F$. 
This implies that for all $i=-\rho-\ell,\dots, -\rho$, the coefficient  corresponding to $z^i$ of  both $M_\Delta^{(m)} A$ and of $(M_\Delta^{(m)})^* B$ are proportional to $F$. Thus we have aligned the two operands of the product inside $h(\cdot)$ in (\ref{yokyok}) so that the potential is high.
 At the same time, in the beginning of the computation, the potential is low because $A$ has only very few 
nonzeroes (namely, on the diagonal).   Creating this big potential gap, which we quantify in Lemma~\ref{lem:entropybounds} below, is  one of the requirements for our proof to work.  The other is upper bounding the  potential change at each step, as we now do.

\begin{lemma}\label{clm:boundnormAB}
For any paraunitary $M$, $ \|MA\|_\mynorm \leq \Theta(\sqrt \ell)$ and $ \|MB\|_\mynorm \leq \Theta(\sqrt \ell)\ .$
\end{lemma}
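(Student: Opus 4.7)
The plan is to reduce the row-wise norm $\|(MA)_{i,:}\|^2$ (and analogously $\|(MB)_{i,:}\|^2$) to a contour integral on the unit circle via the Parseval identity~(\ref{parseval}), and then collapse the integrand using Claim~\ref{clm:paraunitaryunitary}. This converts both bounds into an elementary estimate of the norm of a scalar Laurent polynomial.

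For $A$, I would factor $A[z] = p[z]\cdot\Id$ with $p[z] = \sum_{k=0}^{\rho+\ell}\Delta^k z^{-k}$, so that $(MA)_{ij} = p\cdot M_{ij}$. Applying~(\ref{parseval}) term-by-term and swapping sum and integral gives
$$ \|(MA)_{i,:}\|^2 \;=\; \int_0^1 |p[e^{2\pi\iota t}]|^2 \sum_j |M_{ij}[e^{2\pi\iota t}]|^2\,dt. $$
By Claim~\ref{clm:paraunitaryunitary}, $M[e^{2\pi\iota t}]$ is unitary, so the inner sum is identically $1$ and what remains is $\|p\|^2 = \sum_{k=0}^{\rho+\ell}\Delta^{2k} \le 1/(1-\Delta^2)$. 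Combining $1-\Delta^2 \ge \mu$ with $\ell = \Theta(1/\mu)$ from~(\ref{eq:ellapprox}) makes this $\Theta(\ell)$, uniformly in $i$, giving the stated bound on $\|MA\|_\mynorm$.

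For $B$, set $q[z] = \sum_{i=\rho}^{\rho+\ell} z^{-i}$ so that $B = (M_\Delta^{(m)})^* q F$, and let $N := M\cdot(M_\Delta^{(m)})^*$. Since paraunitary matrices form a group, $(M_\Delta^{(m)})^*$ (the inverse of the paraunitary $M_\Delta^{(m)}$ from Claim~\ref{clm:allparaunitary}) is paraunitary, and so is $N$. Because $F$ is a constant unitary matrix, $(NF)[\omega]$ is unitary for every $\omega$ on the unit circle, and the same Parseval computation, now with $p$ replaced by $q$ and $M$ replaced by $NF$, yields $\|(MB)_{i,:}\|^2 = \|q\|^2 = \ell+1 = \Theta(\ell)$.

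The argument is essentially a bookkeeping exercise once the two preconditioners are recognized as scalar Laurent polynomials multiplying factors that preserve pointwise unitarity on the unit circle. The only step requiring any real care is checking that $N$ is paraunitary and that right-multiplication by the constant unitary $F$ keeps the row-sum $\sum_j |(NF)_{ij}[\omega]|^2$ equal to $1$; both are immediate from the group property of paraunitary matrices plus Claim~\ref{clm:paraunitaryunitary}.
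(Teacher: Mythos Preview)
Your proposal is correct and follows essentially the same approach as the paper: both reduce the row norm to $\|p\|^2$ (resp.\ $\|q\|^2$) via the Parseval identity~(\ref{parseval}) and the fact that the relevant matrix is unitary on the unit circle. The paper packages the two cases into a single family $\X = \{U[z]p[z]X\}$ and computes $(YY^*)_{i,i}$ explicitly, whereas you handle the two cases separately by directly collapsing $\sum_j |(\cdot)_{ij}[\omega]|^2=1$; the underlying idea and the resulting bounds are identical.
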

\begin{proof}
The proof relies on the fact that both matrices $MA$ and $MB$ belong to the family $\X$ defined as:
$$  \{U[z]p[z]X:\ U\mbox{ is paraunitary}, p\mbox{ is a polynomial in }\C[z], X\in \C^{n\times n}\mbox{ is unitary}\}\ .$$
The main technical claim, 
is to show that for  $Y = U[z]p[z]X\in \X$, any row of $Y$ has norm $\|p\|$, as can be seen by the following chain
for fixed $i\in [n]$:


\begin{eqnarray}
\|Y_{i,:}\|^2 &=& \sum_{j=1}^n\| Y_{i,j}\|^2 \nonumber 
=  \sum_{j=1}^n \int_{0}^1 |Y_{i,j}[e^{2\pi\iota t}]|^2dt \nonumber
=  \sum_{j=1}^n \int_{0}^1 Y_{i,j}[e^{2\pi\iota t}] \overline{ Y_{i,j}[e^{2\pi\iota t}] }dt \nonumber\\
&=&  \int_{0}^1   \sum_{j=1}^nY_{i,j}[e^{2\pi\iota t}] \overline{ Y_{i,j}[e^{2\pi\iota t}] }dt \nonumber
=  \int_{0}^1   \sum_{j=1}^n(U[z]p[z]X)_{i,j}[e^{2\pi\iota t}] \overline{ (U[z]p[z]X)_{i,j}[e^{2\pi\iota t}] }dt \nonumber\\
&=&  \int_{0}^1   \sum_{j=1}^n(U[z]p[z]X)_{i,j}[e^{2\pi\iota t}] \overline{ (U[z]p[z]X)_{i,j}}[e^{2\pi\iota t}] dt \nonumber \\
&=&  \int_{0}^1 \left (  \sum_{j=1}^n(U[z]p[z]X)_{i,j}\overline{ (U[z]p[z]X)_{i,j}}\right)[e^{2\pi\iota t}] dt \nonumber \\
&=&  \int_{0}^1 \left (  U[z]p[z]X)(U[z]p[z]X)^*\right)_{i,i}[e^{2\pi\iota t}] dt \nonumber
=  \int_{0}^1 \left (  U[z]p[z]X X^* p^*[z] U^*[z]\right)_{i,i}[e^{2\pi\iota t}] dt \nonumber\\
&=&  \int_{0}^1 \left (  U[z]p[z] p^*[z] U^*[z]\right)_{i,i}[e^{2\pi\iota t}] dt \nonumber
=  \int_{0}^1 \left ( p[z]p^*[z]\right)\left( U[z] U^*[z]\right)_{i,i}[e^{2\pi\iota t}] dt \nonumber\\
&=&  \int_{0}^1 \left ( p[z]p^*[z]\right)[e^{2\pi\iota t}] dt\nonumber 
=  \int_{0}^1 \left | p[e^{2\pi\iota t}]\right|^2 dt\label{conclusion} 
= \|p\|^2\  .
\end{eqnarray}
The first equality was by definition, the second by (\ref{parseval}), the fourth by changing order
of integration and summation, the fifth by the definition of $Y$, the sixth by our definition of polynomial
conjugation, the seventh by the fact that polynomial evaluation commutes with polynomial multiplication,
the eighth by definition of matrix multiplication, the ninth by properties of conjugation of products, the tenth by unitarity of $X$, the eleventh by commutativity of $p$ (a polynomial over scalars) and $P$ (a polynomial over matrices), the twelfth by paraunitarity of $U$, and the fourteenth again
by (\ref{parseval}).

Now, by our construction, $MA$ equals 
$U[z]p[z]X\in \X$ for $U=M$, $X=\Id$, and $$p[z] = 1+\Delta z^{-1} + \Delta^2 z^{-2} + \cdots \Delta^{\rho+d+1+\ell} z^{-\rho-d-1-\ell}\ .$$
   Using  the technical claim, 
\begin{eqnarray}
\|A\|_\mynorm = \|p \|&=& \sqrt{\sum_{k=0}^{\rho+d+1\ell} \Delta^{2k}} 
\leq  \sqrt{\frac{1}{1-\Delta^2}} =  \sqrt{\frac{1}{1-(1-\mu)^2}} = \Theta\left(\sqrt{1/\mu}\right) = \Theta(\sqrt{\ell})\ .\nonumber
\end{eqnarray}
As for $MB$, notice that  it equals  $U[z]p[z]X\in \X$ for $U=  M\cdot (M_\Delta^{(m)})^*$, $p= \sum_{i=\rho}^{\rho+\ell} z^{-i}$, $X=F$.   Therefore again we use the technical claim to get
$\|B\|_\mynorm = \|p \|= \sqrt{\sum_{k=\rho}^{\rho+\ell}  1}  = \Theta(\sqrt \ell)$.
\end{proof}
\noindent
Let us now compute the preconditioned quasi-entropy of $M_\Delta^{(0)} = \Id$ and of  $M_\Delta^{(m)}$.  \begin{lemma}\label{lem:entropybounds}
\begin{equation}
\Phi_{A,B}(M^{(m)}_\Delta) = \Omega\left (\frac{ \ell  n\log n}{\sqrt \kappa}\right) 
\ \ \ \ \ \ \ \ 
\Phi_{A,B}(\Id)                         = o\left(\frac{\ell n\log n}{\sqrt{\kappa}}\right )\ . \nonumber
\end{equation}
\end{lemma}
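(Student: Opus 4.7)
My plan splits the proof into two independent bounds.

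\textbf{Lower bound on $\Phi_{A,B}(M^{(m)}_\Delta)$.} First I pin down the relevant coefficient matrices of $M^{(m)}_\Delta A$ and $M^{(m)}_\Delta B$ on the index range $-R$. The preceding claim gives $\coeff{M^{(m)}_\Delta A}{-i} = \Delta^i F = \Theta(\kappa^{-1/2}) F$ for every $i\in R$, since $\Delta^\rho = \kappa^{-1/2}$ and $\Delta^\ell = \Theta(1)$ force $\Delta^i = \Theta(\kappa^{-1/2})$ throughout $R$. For $B$, paraunitarity collapses $M^{(m)}_\Delta B = M^{(m)}_\Delta (M^{(m)}_\Delta)^* \sum_{i=\rho}^{\rho+\ell} z^{-i} F$ to $\sum_{i=\rho}^{\rho+\ell} z^{-i} F$, so $\coeff{M^{(m)}_\Delta B}{-i} = F$ for $i\in R$ and vanishes at every other index appearing in the support of $M^{(m)}_\Delta A$. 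Hence only $k=-i$ with $i\in R$ contribute to (\ref{yokyok}), with entrywise argument $\Theta(\kappa^{-1/2}) F_{i',j'}\overline{F_{i',j'}} = \Theta(\kappa^{-1/2}) |F_{i',j'}|^2$. The scaling identity $h(cu) = c\,h(u) - cu\log c$ with $c = \Theta(\kappa^{-1/2})$ and $u = |F_{i',j'}|^2$, combined with $\sum_{i',j'} |F_{i',j'}|^2 = n$, $\Phi(F) = \Theta(n\log n)$ from (\ref{fourierentropy}), and $\log\sqrt\kappa = o(\log n)$ (from $\kappa = n^{o(1)}$), shows that each $i\in R$ contributes $\Theta(\kappa^{-1/2} n\log n)$; summing over the $|R| = \ell+1$ values of $i$ yields the desired lower bound.

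\textbf{Upper bound on $\Phi_{A,B}(\Id)$.} Here I exploit the extreme sparsity of $A$: each $\coeff{A}{k}$ is diagonal and nonzero only for $k\in\{-\rho-\ell,\ldots,0\}$, so the sum (\ref{yokyok}) has at most $(\rho+\ell+1)n$ nonzero summands, each of the form $h(\Delta^k\,\coeff{B}{-k}_{i,i})$ with $k\in\{0,\ldots,\rho+\ell\}$ and $i\in[n]$. Cauchy-Schwarz, together with $\sum_k \Delta^{2k} = O(\ell)$ and the row-norm bound $\|B_{i,:}\|^2 = O(\ell)$ from Lemma~\ref{clm:boundnormAB} (instantiated at $M=\Id$), yields the per-row $\ell^1$ estimate $\sum_k |\Delta^k\,\coeff{B}{-k}_{i,i}| = O(\ell)$. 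Applying Jensen's inequality to the concave function $-t\log t$ on $(0,1]$, subject to this $\ell^1$ constraint and at most $\rho+\ell+1$ summands, bounds the contribution of terms with $|u_{k,i}|\le 1$ by $O(\ell\log((\rho+\ell)/\ell)) = O(\ell\log\log\kappa)$, using $\rho/\ell = \Theta(\log\kappa)$. The few remaining "large" summands are controlled via $|u_{k,i}| = O(\sqrt\ell)$ together with the $\ell^2$-mass bound, contributing at most $O(\ell\log\ell)$ per row. Summing over the $n$ rows gives $\Phi_{A,B}(\Id) = O(n\ell(\log\log\kappa+\log\ell))$, which is $o(\ell n\log n/\sqrt\kappa)$ in the nontrivial regime $\sqrt\kappa \ll \log n$ enforced by the paper's standing $\kappa = n^{o(1)}$ assumption (otherwise the main theorem is already implied by the trivial linear lower bound on $m$).

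\textbf{Main obstacle.} The delicate step is the upper bound on $\Phi_{A,B}(\Id)$, because $B = (M^{(m)}_\Delta)^*\sum_i z^{-i}F$ depends on the unknown algorithm's target polynomial matrix and is accessible only through paraunitarity and the row-norm bounds of Lemma~\ref{clm:boundnormAB}. Sparsity of $A$ is what makes the argument tractable: it reduces each row to a one-dimensional "entropy under $\ell^1$-bounded mass" optimization controlled by concavity, and the resulting $\log\log\kappa+\log\ell$ logarithmic gap is then absorbed by the target's $\Theta(\log n/\sqrt\kappa)$ factor in the nontrivial regime.
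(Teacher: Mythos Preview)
Your lower bound argument is correct and is essentially the paper's computation.

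The upper bound has a genuine gap: your last step, asserting that $O\bigl(n\ell(\log\log\kappa+\log\ell)\bigr)=o\bigl(\ell n\log n/\sqrt\kappa\bigr)$ ``in the nontrivial regime $\sqrt\kappa\ll\log n$'', is false. Two separate things go wrong. First, $\ell=\Theta\bigl(1/(1-\Delta)\bigr)$ is a parameter of the algorithm, with no a~priori bound in terms of $n$ or $\kappa$; if $\Delta=1-1/n$ then $\log\ell=\Theta(\log n)$ and already $\sqrt\kappa\,\log\ell\ge\log n$ for every $\kappa\ge 1$. Second, even the $\log\log\kappa$ factor does not get absorbed: take $\sqrt\kappa=\log n/\log\log\log n$, which is $o(\log n)$, and compute $\log\log\kappa=\Theta(\log\log\log n)$, so $\sqrt\kappa\,\log\log\kappa=\Theta(\log n)$, not $o(\log n)$. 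Your Jensen step is the culprit: with the $\ell^1$ constraint $\sum|u_k|=O(\ell)$ spread over $\rho+\ell+1=\Theta(\ell\log\kappa)$ indices, the worst case genuinely produces the $\log\log\kappa$ factor; and your ``large'' bound $|u_k|=O(\sqrt\ell)$ genuinely produces $\log\ell$.

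The paper avoids both artifacts by not summing over the full range $k\in\{0,\dots,\rho+\ell\}$ at all. It restricts to the window $k\in\{\rho,\dots,\rho+\ell\}$ (in your indexing), where $\Delta^{k}=\Theta(\kappa^{-1/2})$ \emph{uniformly}, and then writes $h(\Delta^{k}\bar c_k)=\Delta^{k}h(\bar c_k)-\Delta^{k}\bar c_k\log\Delta^{k}$. The second piece is $O(\kappa^{-1/2}\cdot(\ell+1)\cdot\log\sqrt\kappa)=o(\kappa^{-1/2}\ell\log n)$ using only $\log\kappa=o(\log n)$. The first piece is $\Theta(\kappa^{-1/2})$ times the constrained maximum $\sup_{\sum\tau_k^2\le\ell+1}\bigl|\sum_{k=1}^{\ell+1}\tau_k\log|\tau_k|\bigr|$, which a short Lagrange-multiplier argument shows is $O(\ell)$ (the extremal point contributes $O(\sqrt\ell\,\log\ell)=O(\ell)$, the interior critical point contributes $O(\ell)$). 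Crucially there are only $\ell+1$ variables here, not $\Theta(\ell\log\kappa)$, so no $\log\log\kappa$ appears; and the optimization is under an $\ell^2$ constraint, so the $\log\ell$ dissolves into $O(\ell)$. Both pieces thus carry an explicit $\kappa^{-1/2}$ and are linear in $\ell$, giving $|\Phi_{A,B}(\Id)|=o(\kappa^{-1/2}\ell n\log n)$ without any appeal to a ``nontrivial regime''.
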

\begin{proof}
For the left bound in the lemma statement: 
\begin{eqnarray}
\Phi_{A,B}(M^{(m)}_\Delta) &=& 
\sum_{k=-\rho-\ell}^{-\rho}\sum_{i,j} h\left(\coeff{(M^{(m)}_\Delta A)_{i,j}}{k} \cdot \overline{\coeff{(M^{(m)}_\Delta B)_{i,j}}{k}}\right) \nonumber \\
 &=&
 \sum_{k=-\rho-\ell}^{-\rho}\sum_{i,j} h\left (( \Delta^{-k}F_{i,j})\cdot( F_{i,j})\right) \label{hphphphp}\\
 &=& \sum_{k=-\rho-\ell}^{-\rho}\sum_{i,j} h(F_{i,j}^2\Delta^{-k}) 
= -\sum_{k=-\rho-\ell}^{-\rho}\sum_{i,j} \Delta^{-k} F_{i,j}^2\log (\Delta^{-k} F_{i,j}^2)\ . \nonumber 
  \end{eqnarray}
  Recall that 
 $ \Delta^{-\rho} = \kappa^{1/2} = n^{o(1)}$.  
Also recall that  by definition of $\ell$, $\Delta^\ell = \Theta(1)$.  Therefore $\log \Delta^{-k}$ is $o(\log n)$ for all $k\in R$, and  
 \begin{eqnarray}
\Phi_{A,B}(M_\Delta^{(m)}) &=& -\sum_{k=-\rho-\ell}^{-\rho}\sum_{i,j} \left (\Delta^{-k} F_{i,j}^2\log F_{i,j}^2 
        +o(\log n) \Delta^{-k} F_{i,j}^2  \right) \nonumber \\
    &=&\sum_{k=-\rho-\ell}^{-\rho} \Delta^{-k} \Phi(F)  - o(\ell\Delta^{\rho}n\log n)
   = \Theta(\ell\Delta^\rho) \Phi(F) - o(\ell\Delta^{\rho}n\log n)  \nonumber \\
          &=& \Theta(\ell\Delta^\rho) n\log n  - o(\ell\Delta^{\rho}n\log n)  = \Omega(\ell\kappa^{-1/2} n \log n) \ , \nonumber
\end{eqnarray}
where the second equality  is from $\sum F_{i,j}^2 = n$ (unitarity of $F$) and definition of $\Phi$, the third is
from the fact that $\Delta^i = \Theta(1)$ for all $i=0\dots\ell$, and the fourth from (\ref{fourierentropy}).
For the right bound in the lemma statement: 
\begin{eqnarray}
\Phi_{A,B}(\Id) &=& 
\sum_{k}\sum_{i,j=1}^n h\left(\coeff{( A)_{i,j}}{k} \cdot\overline{\coeff{(B)_{i,j}}{k}}\right) \nonumber \\
 &=&
 \sum_{k=-\rho-\ell}^{-\rho}\sum_{i=1}^n h\left (( \Delta^{-k}1)\cdot\overline{ \coeff{B_{i,i}}{k}}\right) 
=
 \sum_{i=1}^n  \sum_{k=-\rho-\ell}^{-\rho}h\left (( \Delta^{-k}1)\cdot\overline{ \coeff{B_{i,i}}{k}}\right)\ . \nonumber \label{eq:PhiABId}
\end{eqnarray}
Fixing $i\in [n]$, we upper bound the absolute value of the last inner sum, which equals:
$$\underbrace{ \sum_{k=-\rho-\ell}^{-\rho}\Delta^{-k} h\left(\overline{ \coeff{B_{i,i}}{k}}\right)}_{E_1} - \underbrace{ \sum_{k=-\rho-\ell}^{-\rho}\Delta^{-k} \overline{ \coeff{B_{i,i}}{k}} \log \Delta^{-k}}_{E_2}\ .$$
To bound $E_1$, we first use  Lemma~\ref{clm:boundnormAB}, to argue that for all $i\in [n]$, $\|B_{i,:}\|^2 = \ell+1$.  (To be precise, it was  only shown that $\|B_{i,:}\|^2= \Theta(\ell)$ but it is obvious from the proof of Lemma~\ref{clm:boundnormAB} that the exact value is $\ell+1$.)  This in particular implies that
$\sum_{k=-\rho-\ell}^{-\rho}  |\coeff{B_{i,i}}{k}|^2 \leq \ell+1$.
 We therefore bound $E_1$ by:
\begin{equation}\label{qewrvgvg}
\sup_{\sum \tau_k^2  \leq \ell+1} \left | \sum_{k=-\rho-\ell}^{-\rho} \Delta^{-k} \tau_k \log |\tau_k| \right |\ .
\end{equation}
This is shown to be  $\Theta(\kappa^{-1/2}\ell)$, using standard calculus, as follows.  By the triangle inequality, (\ref{qewrvgvg}) is at most:
\begin{equation}\label{zxcbsreh}\sup_{\sum \tau_k^2  \leq \ell+1}  \sum_{k=-\rho-\ell}^{-\rho} \Delta^{-k} \left |\tau_k \log |\tau_k| \right |\ . \end{equation}
We now use the fact that $\Delta^{k} = \Theta(\Delta^\rho) = \Theta(\kappa^{-1/2})$ for all $k\in R=\{\rho,\dots, \rho+\ell\}$.  Therefore (\ref{zxcbsreh}) is at most a constant times
\begin{equation}\label{adsf9jf00000dsl}\kappa^{-1/2}\sup_{\sum \tau_k^2  \leq \ell+1}  \sum_{k=1}^{\ell+1}  \left |\tau_k \log |\tau_k| \right |\ . \end{equation}

The contribution coming from $k$'s such that $|\tau_k|\leq 1$ to  (\ref{adsf9jf00000dsl}) is at most
 $\Theta(\kappa^{-1/2}\ell)$, because $\left |\tau\log|\tau|\right |$ is a continuous function and hence bounded in the range $[-1,1]$.  
We hence bound
\begin{equation}\label{zxcbs99reh}
\sup_{K\leq \ell+1} \sup _{\left [ \begin{matrix} {\forall k\leq  K: \tau_k >1} \\ {\sum \tau_k^2\leq \ell+1}\end{matrix}\right ]} \kappa^{-1/2}\sum_{k=1}^K  \tau_k \log \tau_k\ . \end{equation}
Using the method of lagrange multipliers, the inner supremum  can only be obtained at one of the two points: (1) At the extreme candidate point $(\tau_1=\sqrt{\ell-K+2}, \tau_2=\dots=\tau_K=1)$, for which the function value is $O(\kappa^{-1/2} \sqrt\ell\log\ell)$, or  (2) when $(\tau_1=\dots=\tau_K=\sqrt{\frac{\ell+1}{K}})$, for which case the
value is $\kappa^{-1/2} \sqrt {K(\ell+1)}\log \sqrt{\frac{\ell+1}{K}}$.  But the last expression, maximized
over $K\in [1,\ell+1]$, is at most $\Theta(\ell)$, obtained at $K=\Theta(\ell)$.
Combining these arguments, the value of  (\ref{zxcbsreh}) is $\Theta(\kappa^{-1/2}\ell)$, as required.

To bound $E_2$, we use the fact that for all $k\in R$, $\Delta^{-k}=\Theta(\Delta^\rho)= \Theta(\kappa^{-1/2})$ and $\log \kappa=o(\log n)$.  Therefore $|E_2|$ is $o\left (\kappa^{-1/2}(\log n) \sum_{k=-\rho-\ell}^{-\rho} | \coeff{B_{i,i}}{k}|\right )$.  But by the well known $\ell_1/\ell_2$ bound,
we have that $\sum_{k=-\rho-\ell}^{-\rho} | \coeff{B_{i,i}}{k}| \leq \ell+1$.  Therefore, $|E_2|$ is $o(\kappa^{-1/2}\ell \log n)$. 
Combining, we have that $|\Phi_{A,B}(\Id)| = o(\ell \kappa^{-1/2}n \log n)$.
\end{proof}

Combining Lemma~\ref{lem:entropybounds} and~\ref{lem:entropychange} with Claim~\ref{clm:boundnormAB}, we get that the algorithm $\A$
increases the potential defined by the preconditioned quasi-entropy $\Phi_{A,B}(\cdot)$ from $o\left (\frac {\ell n \log n}{\sqrt \kappa}\right)$ to 
$ \Omega\left (\frac {\ell n \log n}{\sqrt \kappa}\right)$,
and at each step it can change the potential by no more than $\Theta(\sqrt \ell)\Theta(\sqrt \ell) = \Theta(\ell)$.  Hence, the number of steps is as stated in Theorem~\ref{thm:main}, concluding its proof.
(Reminder: See Section~\ref{sec:def:algcond}
for generalization to the case $\A$ is integral with respect to no $\Delta$.)

\section{Main Conjecture with Some Evidence, and Future Work}\label{evidence}
Our main conjecture, which is an algebraic version of Conjecture~\ref{conj1}, is as follows:
\begin{conjecture}\label{conj2}[Algebraic version of Conjecture~\ref{conj1}]
$\kappa$-Algebraic well conditioned computation allows only $O(\log(\kappa))$ speedup of FFT.
\end{conjecture}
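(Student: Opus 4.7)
The plan is to refine the preconditioned-entropy argument of Theorem~\ref{thm:main} to eliminate the $1/\sqrt{\kappa}$ factor appearing in the end-potential gap. That factor comes from the weight $\Delta^\rho$ which the geometric kernel $A=\Id\sum_{k=0}^{\rho+\ell}\Delta^{k}z^{-k}$ assigns to the target window where $M_\Delta^{(m)}A$ aligns with $F$: the coefficient of $M_\Delta^{(m)}A$ in the target range is $\Delta^\rho F=F/\sqrt{\kappa}$ rather than $F$ itself. The goal is to eliminate the $1/\sqrt{\kappa}$ at the cost of only a $\log\kappa$ multiplier in the per-step change, yielding the desired ratio $n\log n/\log\kappa$.

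I would replace the single pair $(A,B)$ by a family $\{(A_j,B_j)\}_{j=1}^{N}$ with $N=\Theta(\log\kappa)$. Each $A_j$ would be a narrow kernel of fixed $O(1)$ width that aligns with $F$ at a distinct sub-interval of the degree range $[\val(M_\Delta^{(m)}),\deg(M_\Delta^{(m)})]$, but without the $\Delta^\rho$ dampening; one way to arrange this is to rescale the geometric coefficients so that the peak weight equals $1$ at the target sub-interval. Summing the potentials $\Psi=\sum_j\Phi_{A_j,B_j}$, the entropy-bound calculation of Lemma~\ref{lem:entropybounds} should produce $\Psi(M_\Delta^{(m)})-\Psi(\Id)=\Omega(n\log n)$ by additivity across the $N$ kernels. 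The per-step change of $\Psi$, by Lemma~\ref{lem:entropychange}, is bounded by $\sum_j\|MA_j\|_\mynorm\|MB_j\|_\mynorm=O(N)=O(\log\kappa)$ provided each kernel has $\ell^2$-norm of order $1$. Since constant gates leave each $\Phi_{A_j,B_j}$ unchanged by part (1) of Lemma~\ref{lem:entropychange}, they are free for $\Psi$ as well, and combining the two bounds yields $m\geq\Omega(n\log n/\log\kappa)$.

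The main obstacle is to construct the family $\{(A_j,B_j)\}$ so that it \emph{simultaneously} (i) aligns with $F$ at each sub-interval without the $1/\sqrt{\kappa}$ dampening, (ii) maintains bounded $\ell^2$-norm, and (iii) combines additively inside the cross-quasi-entropy $h(\cdot)$ without destructive interference between neighboring kernels. The geometric kernel of Section~\ref{pppppp} achieves (ii)--(iii) via its exponential decay, paying for it with the $1/\sqrt{\kappa}$ penalty in (i); removing the decay to regain (i) seems to violate (iii), because adjacent narrow kernels produce cancellations inside $h(\cdot)$ that erode the end-potential gap. I expect that resolving this tension will require exploiting the paraunitary \emph{factorization} structure of $M_\Delta^{(m)}[z]$ from the signal-processing literature---any paraunitary polynomial matrix of degree $D$ admits a lattice decomposition into elementary rotation and monomial-delay components---which places strong algebraic constraints on the coefficient matrices $\coeff{M_\Delta^{(m)}}{k}$ that the present proof does not exploit. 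These rigidity constraints should force the kernel family to combine constructively rather than destructively, thereby closing the gap between the $\sqrt{\kappa}$ loss of Theorem~\ref{thm:main} and the $\log\kappa$ target of Conjecture~\ref{conj2}.
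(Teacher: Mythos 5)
This statement is Conjecture~\ref{conj2}, which the paper does \emph{not} prove: it is posed as an open problem, and the authors explicitly report that their own attempts along exactly this line (importing the ideas that give logarithmic dependence in \cite{DBLP:journals/corr/AilonY16}) have so far failed. Your text is therefore a research plan rather than a proof, and the step you yourself flag as the ``main obstacle''---constructing a family $\{(A_j,B_j)\}$ that simultaneously aligns with $F$ without the $\Delta^{\rho}$ dampening, keeps $O(1)$ norms, and avoids destructive interference---is precisely the open difficulty; nothing in the proposal closes it, and the appeal to lattice factorizations of paraunitary matrices is at this point a hope, not an argument.

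Beyond the acknowledged gap, two concrete points show the sketch as stated cannot work. First, rescaling the kernel so that ``the peak weight equals $1$'' is a no-op for the final bound: since $h(cu)=c\,h(u)-cu\log|c|$, multiplying a preconditioner by $c$ scales the end-potential gap of Lemma~\ref{lem:entropybounds} by essentially $|c|$ (the extra $\log|c|$ term is lower order when $\kappa=n^{o(1)}$) and scales the per-step bound of Lemma~\ref{lem:entropychange} by $|c|$ as well, because that bound is linear in $\|MA\|_\mynorm$; the ratio, hence the lower bound on $m$, is unchanged. The $\Delta^{\rho}$ loss in Theorem~\ref{thm:main} is not a normalization artifact. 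Second, narrow kernels of $O(1)$ width cannot in general be aligned with $F$ at all: $\coeff{M_\Delta^{(m)}A_j}{-i}$ is then a short weighted sum of a few coefficient matrices of $M_\Delta^{(m)}$, and individual coefficients of a paraunitary matrix need not resemble $F$ or carry $\Omega(n\log n)$ quasi-entropy---only the full $\Delta$-weighted sum $M_\Delta^{(m)}[\Delta]$ equals $F$, which is exactly why the kernel of Section~\ref{pppppp} must span the whole degree window and thereby inherit the geometric dampening. Worse, paraunitarity forces $\sum_k \coeff{M_\Delta^{(m)}}{k}^{*}\,\coeff{M_\Delta^{(m)}}{k}=\Id$, so the squared Frobenius norms of all coefficients sum to $n$; a single undampened copy of $F$ already exhausts this budget, so one cannot have $\Theta(\log\kappa)$ windows each containing an (essentially) unscaled copy of $F$. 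Any proof of Conjecture~\ref{conj2} must confront this mass constraint, and your proposal does not; the conjecture remains open.
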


There is interesting evidence for this conjecture, coming form recent work by Ailon and Yehuda \cite{DBLP:journals/corr/AilonY16}.
Recall that in our construction,
we started with an algorithm  $$\A=(\Id=M^{(0)},M^{(1)}, \dots, M^{(m)}=F)$$ working in $\R^n$, and
mapped  it to a pseudo-algorithm  $\A_\Delta$ working in $n$ dimensions over the ring of polynomial matrices.  The pseudo-algorithm
was used in the analysis only, and it was not assumed to be emulated.  However, such an emulation could be done.   We now explain
how to emulate $\A_\Delta$ using reals (and not polynomials), and then find surprising connections to other work.

 If all  polynomial
matrices in $\A_\Delta$ have monomials of degrees in the range $[-L, L]$, then they can be naturally embedded as real matrices of 
shape $n(2L+1)\times n(2L+1)$.  To explain the embedding, fix a matrix $M[z]$ in the ring of polynomials over $n\times n$ real matrices with monomials in the range $[-L,L]$ only.   The embedding will be denoted $\Psi(M) \in \R^{n(2L+1)\times n(2L+1)}$.
Divide the index set of size $n(2L+1)$  into $2L+1$ block-indices of size $n$ each.    The value of the block-entry $(i,j)$ of $\Psi(M)$ is $\coeff{X}{(j-i)}$, where the integer $j-i$ is understood to be modulo $2L+1$ in the branch $[-L,L]$.    It is not hard to see that this embedding is a homomorphism of the ring of polynomials modulo the identity $(z^{L+1}\equiv z^{-L})$,  over $n\times n$ matrices.

The algorithm $\Psi(\A) := \{ \Psi(M^{(0)}_\Delta), \dots, \Psi(M^{(m)}_\Delta)\}$ in $\R^{n(L+1)\times n(2L+1)}$ can be emulated
as follows.  A multiplication of  row $i$ of $M^{(t)}_\Delta$  by the monomial $z^{a}$ for an integer $a$ (giving $M^{(t+1)}_\Delta$) is emulated by performing a cyclic shift
of the set of $(2L+1)$ rows in $\Psi(M^{(t)}_\Delta)$.  This set of rows corresponds exactly to the  $i$'th index of all $(2L+1)$ index-blocks.  The cyclic shift can be done by a sequence $2L$   swaps, each  swap a rotation (a reflection, in fact).
A rotation affecting rows $i_1,i_2$ of  $M^{(t)}_\Delta$ is emulated by performing $2L +1$ rotations, affecting all pairs of rows corresponding to the pair $(i_1,i_2)$ within each of the $2L+1$ index-blocks.

It is easy to be convinced of the emulation correctness.  
 It is also easy to see that all matrices $\Psi(M^{(t)}_\Delta)$ are unitary (this property could have been proven directly from the definition of $\Psi$ and the para-unitarity of the $M^{(t)}_\Delta$'s).  To summarize, we started with an algorithm over general  $n\times n$ matrices making $m$ steps, passed through an algorithm over paraunitary $n\times n$ polynomial matrices making $m$ steps, and arrived
at an algorithm over unitary $n(2L+1)\times n(2L+1)$ matrices making $\Theta(L m)$ steps.

\begin{eqnarray*}
M^{(t)} \ \ \ \ \ \ \ \ \ \mapsto & M^{(t)}_\Delta[z] &\ \ \ \ \ \  \hookrightarrow \ \ \ \ \ \   \ \ \Psi(M^{(t)}_\Delta[z]) \\
\mbox{\parbox{2.2cm}{\center{Original \\ algorithm \\ over reals \\ $n\times n$ \\ non-unitary}}}\ \ \ \ \  \ \ \ \ \ 
&
\mbox{\parbox{4cm}{\center{Abstract \\ algorithm \\ over  polynomials \\ $n\times n$  \\ paraunitary}}}
&
\mbox{\ \ \ \ \ \ \parbox{5cm}{\center{Emulation of \\ abstract  algorithm \\ over reals \\ $n(2L+1)\times n(2L+1)$ \\ unitary}}}
\end{eqnarray*}

Now, what can be said about the final matrix in the emulation, $\Psi(M^{(m)}_\Delta)$?   We only care about the case $m=o(n\log n)$ of course, from which
we have the entropy bound  $$\Phi(\Psi(M^{(m)}_\Delta)) = o(Ln\log(L n))\ .$$ (This can be derived from the definition (\ref{def:entropy}) together with  Lemma~\ref{lem:entropychange}, but it is recommended to use \cite{Ailon13} for a simpler treatment of unitary computation).

Take $L$ to be $\rho+\ell$, as defined in Section~\ref{pppppp}, and assume that for all $t=0..m$, $-\rho \leq \val(M^{(t)}_\Delta) \leq \deg(M^{(t)}_\Delta) \leq \rho$.  (This is a simplifying assumption that can be removed using a simple trick that we omit.)
From Claim~\ref{popopopopo} we have that the matrix  $\Psi(M^{(m)}_\Delta)\Psi(A)$ (where $A$ is the preconditioner from
Section~\ref{pppppp}) contains $\Theta(\rho)$ copies of the matrix $F\Theta(1/\sqrt \kappa)$ as blocks.\footnote{Note that $\Psi(M^{(m)}_\Delta)\Psi(A)$ is not the same as $\Psi(M^{(m)}_\Delta A)$, due to modular wrap-around of monomials.  But the argument is still correct.}  
 This is an interesting outcome:  We have a matrix $\Psi(M^{(m)}_\Delta)\Psi(A)$ of total low entropy (because $m$ is small) which contains $\Theta(\rho)$ copies of  $F$, each scaled down by $\Theta(\sqrt \kappa)$.\footnote{We also need to show that the preconditioner $\Psi(A)$ can affect the
entropy of $\Psi(M^{(m)}_\Delta)$ by at most $o(LN\log (LN))$, which we omit here for the sake of simplicity, and since this section is used for explaining intuitive evidence.}  This is an odd situation.  We have computed, in our emulation, and using an algorithm of perfect condition number $1$,  a matrix of low entropy  that contains (under a low-entropy preconditioning)  a scaled-down high-entropy matrix as sub-matrices.  What is the computational complexity of such a matrix?

 Ailon and Gal tackled this question  in \cite{DBLP:journals/corr/AilonY16},  
by studying a toy example:   Is it possible to compute  a ``Fourier Perturbation": $$\Id+\eps F\ ,$$ for some small $\eps$ using an almost unitary algorithm? (Note that
the matrix $\Id+\eps F$ is not quite unitary, but very close.) The answer is that at least $\Omega((n\log n)/\log(1/\eps))$ steps
are needed in a computational model very close to unitary.     The parameter $\eps$ corresponds to our $1/\sqrt{\kappa}$.  If we could somehow generalize that result here, we  would get a denominator of $\log \kappa$ in (\ref{gugugu}).    Doing that seems to require exploring very interesting derivations of entropy
functions, as Ailon and Yehuda did in  \cite{DBLP:journals/corr/AilonY16}  for the toy problem.  We were unable as of yet to obtain this generalization.

By Lemma~\ref{lem:condition} the following conjecture is stronger than Conjecture~\ref{conj2}, and therefore harder to prove:
\begin{conjecture}\label{conj3}[Geometric version of Conjecture~\ref{conj1}]
$\kappa$-(geometric) well conditioned computation allows only $O(\log(\kappa))$ speedup of FFT.
\end{conjecture}
It is interesting to study algebraic condition number in its own right as a measure of  numerical stability.  Consider the following  related question:  Lemma~\ref{lem:condition} tells us that if an algorithm is algebraically $\kappa$ well conditioned then it is also  (geometrically) $\kappa$-well conditioned.  The converse is clearly not true, but perhaps some weak form of the converse should hold.  In the extreme
case, for example, note that an algorithm that is (geometrically) $1$-well conditioned \emph{must also be} algebraically $1$-conditioned, because such an algorithm can only use constants $1$ and $-1$.
Could  we generalize this observation to obtain some kind of converse for Lemma~\ref{lem:condition}?

\noindent
Some more noteworthy open problems: 
\begin{itemize}
\item
 Extending the results to the extra-memory (non in-place) model. 
\item  Obtaining computational lower bounds for computing a Johnson-Lindenstrauss transform.  There has been work showing that a JL transform can be sped up using the Fourier transform \cite{DBLP:journals/siamcomp/AilonC09}, but computational lower bounds
for Fourier do not imply computational lower bounds for JL.  Note that there are tight lower bounds
for dimensionality parameters of JL \cite{DBLP:conf/icalp/LarsenN16}, but they are not computational. 
We also note the question of whether this line of work can be used to show lower bounds for the \emph{sparse} Fourier transform \cite{DBLP:conf/focs/IndykK14, DBLP:conf/soda/IndykKP14, 6879613}.
\end{itemize}
\bibliography{low_bound_fft} 

\appendix

\section{The General Case: Algorithms That are Not Integral}\label{sec:def:algcond}

In this section we show how to extend the main result in this work to the case in which the
algorithm $\A$ is integral with respect to no $\Delta$.

\subsection{Extension of `Well Conditioned' Definition to the General Case}
Fix an algorithm $\A = (\Id = M^{(0},\dots, M^{(m)})$ in $\R^n$.  Let $$\Delta_1 < \Delta_2 < \cdots < \Delta_q <\dots$$ be  an infinite,  increasing sequence of numbers in the open interval $\left (\frac 2 3, 1\right )$, tending to $1$ in the limit.  For each index $q$ we define another algorithm $\A_q = (M^{(0}_q,\dots, M^{(m)}_q)$ inductively, as follows.  First, $M^{(0)}_q = \Id$.  For $t\geq 1$, $M^{(t)}_q$ is obtained from $M^{(t-1)}_q$ depending on whether $M^{(t)}$ is obtained from $M^{(t-1)}$ by a rotation or a constant
gate.  In case of rotation, $M^{(t)}_q$ is obtained by applying the same rotation to $M^{(t-1)}_q$.
In case of a constant gate $c$ acting on row $i$, The matrix $M^{(t)}_q$ is obtained from $M^{(t-1)}_q$ 
by multiplying the $i$'th row by
$$ c_q := \Delta_q^ {\lceil \log_{\Delta_q} c\rceil}\ .$$

The resulting algorithm $\A_q$ is clearly integral with respect to $\Delta_q$.
Additionally, it is not hard to see by standard limit calculus and by induction, that
tor each $t=0\dots m$, $\lim_{q\rightarrow \infty} M^{(t)}_q = M^{(t)}$.
We are now ready to extend Definition~\ref{defn:algwellcond} to the non integral case.
\begin{definition}\label{tototototo}
$\A$ is algebraically $\kappa$-well conditioned if for some sequence $\Delta_1<\Delta_2<\cdots$ as above,
there exists a sequence $\kappa_1, \kappa_2,\dots$ tending to $\kappa$,  such that  $\A_q$ is
$\kappa_q$-well conditioned for all  $q$ (per Definition~\ref{defn:algwellcond}).  
\end{definition}

It should be noted that if $\A$ is algebraically $\kappa$-well conditioned, then any sequence $\Delta_1<\Delta_2<\cdots$ (and not just one) will serve as a witness for the definition.  It should also be noted that  Definition~\ref{tototototo} coincides with Definition~\ref{defn:algwellcond}  for algorithms $\A$
that are $\Delta$-integral for some $\Delta$.  This can be seen, for example, by choosing $\Delta_q = \Delta^{1/q}$
and $\kappa_q=\kappa$ for all $q$.

\subsection{Proof of Lemma~\ref{lem:condition} for The General Case}\label{sec:lemgen}

Assume $\A=(M^{(0)},\dots, M^{(m)})$ is algebraically $\kappa$-well conditioned (per the general Definition~\ref{tototototo}).  Let $\Delta_1<\Delta_2<\cdots$ and $\kappa_1,\kappa_2,\dots$ be as guaranteed to exist by the definition.  Then by Lemma~\ref{lem:condition} (established for the integral case), the algorithm $\A_q$ is geometrically
$\kappa_q$-well conditioned for all $q$.  But the  geometric condition number of a matrix is  a continuous invariant in the domain of nonsingular matrices, because both spectral norm and matrix inverse are continuous functions over matrices.  Therefore, for all $t=0\dots m$, $M^{(t)}$ (which is also the limit of $M^{(t)}_q$) is geometrically $\kappa$-well conditioned.  

\subsection{Proof of Theorem~\ref{thm:main} for the General Case}\label{sec:thmgen}
Assume $\A=(M^{(0)},\dots, M^{(m)})$ is algebraically $\kappa$-well conditioned (per the general Definition~\ref{tototototo}).  
 Let $\Delta_1<\Delta_2<\cdots$ and $\kappa_1,\kappa_2,\dots$ be as guaranteed to exist by the definition.
 Let $F_q = M^{(m)}_q$, namely, the resulting matrix computed by $\A_q$.   First we note that  $\lim F_q = F$.
For each $q$,  we now want to use the statement of the theorem in the integral case, which we have proved.  The problem is that
the resulting matrix $F_q$ does not necessarily have the two key  properties of Fourier transform used in the theorem:
(1) unitarity and (2) high potential $\Phi(F_q)$. 

In order to adjust the proof for  the case of $\Delta_q$-integral algorithm $\A$ computing $F_q$ (instead of $F$), we will first  assume that $q$ is large enough so that
\begin{equation}\label{ooooo} \forall i,j\in[n]\ \ \ |(F_q)_{i,j}-F_{i,j}| \leq |F_{i,j}|/2\ .\end{equation}
This is achievable, because $F_{i,j}\neq 0$ for all $i,j$, for both types of Fourier transform.\footnote{We could also make due with zero valued matrix elements, but for simplicity we avoid this technicality here.}
In particular, (\ref{ooooo}) implies:
\begin{equation}\label{iiiiiii} \sum_{i,j} F_{i,j} (F_q)_{i,j} = \Theta(n)\ \ \ \ \ \Phi_F(F_q) := -\sum_{i,j} F_{i,j}(F_q)_{i,j} \log  (F_{i,j}(F_q)_{i,j}) \geq \Omega(n \log n)\ .\end{equation}
The trick is now to use preconditioners $A_q, B_q$ as defined in Section~\ref{pppppp}.
$$ A_q = \Id(1+\Delta_q z^{-1} + \Delta_q^{2}z^{-2} +  \Delta_q^{3}z^{-3}+\cdots + \Delta_q^{\rho_q+\ell_q}z^{-\rho_q-\ell_q})\ \  B_q = ((M_q)_{\Delta_q}^{(m)})^* \sum_{i=\rho_q}^{\rho_q+\ell_q} z^{-i}F\ ,$$
where $\rho_q$ is $\frac{\log \kappa_q}{2\log \Delta_q^{-1}}$ and $\ell_q$ is  the smallest integer such that
$\Delta_k^\ell \leq 1/2$.  Note that in the definition of $B_q$ we used $F$ and not $F_q$.  This is important, because we need Claim~\ref{clm:boundnormAB} and Lemma~\ref{lem:entropybounds} to work (they rely on $B$ being a product of a  unitary matrix, a paraunitary matrix and a polynomial over scalars).
Lemma~\ref{lem:entropybounds} now reads as
\begin{eqnarray}
\Phi_{A_q,B_q}((M_q)^{(m)}_\Delta) &=& \Omega\left (\frac{ \ell_q  n\log n}{\sqrt {\kappa_q}}\right) \label{potential bounds1} \\
\Phi_{A_q,B_q}(\Id)                         &=& o\left(\frac{\ell n\log n}{\sqrt{\kappa_q}}\right ) \label{potential bounds2}
\end{eqnarray}
For the proof of Lemma~\ref{lem:entropybounds}, the only thing worth noting is that the expression in line  (\ref{hphphphp}) becomes 
$$ \sum_{k=-\rho_q-\ell_q}^{-\rho_q}\sum_{i,j} h\left (( \Delta_q^{-k}(F_q)_{i,j})\cdot( F_{i,j})\right)\  , $$
that is, one copy of $F_{i,j}$ (coming from the preconditioner $A_q$) is replaced by $(F_q)_{i,j}$, and
the other remains untouched.  The remainder of the proof of the lemma proceeds in an obvious way, taking advantage of  (\ref{iiiiiii}).

The resulting lower bound for the number $m$ of steps that the algebraically $\kappa_q$-well conditioned algorithm $\A_q$ needs to compute $F_q$ is
$$C\frac{\Phi_F(F_q)}{\sqrt \kappa_q}\ ,$$
for sufficiently large $q$, for a global $C$.  Taking $q$ to infinity and recalling that $\kappa_q\rightarrow \kappa$ we conclude Theorem~\ref{thm:main}'s  proof for the general case.

\section{Proof of Lemma~\ref{lem:entropychange}}\label{sec:proof:lem:entropychange}
Without loss of generality, assume the rotation is applied to rows $1$ and $2$ of $M$, so
\begin{equation}\label{gtrgtrgttgtgtgttgt} M' = \left (\begin{matrix}\cos \theta & \sin \theta & & & \\ -\sin \theta & \cos \theta & & & \\   &  & 1 & & \\ & & & \ddots & \\ & & & & 1 \end{matrix} \right ) M\  \end{equation}
for some $\theta \in [0,2\pi)$.\footnote{To be technically precise, we also need to take into consideration reflections.  But these can be composed as a product of a matrix as in (\ref{gtrgtrgttgtgtgttgt}) and a matrix with $\pm 1$ on the diagonal.  It is trivial that such a matrix makes no difference to the potential.}
  It suffices to consider only the change in the contribution of rows $1$ and $2$ to the potential.   The contribution to the pre-rotation potential coming from the first two rows, which we denote $\Phi_{1,2}$, is 
$$\Phi_{1,2} = \sum_k  \sum_{j=1}^n \sum_{i=1}^2 h\left ( \coeff{(MA)_{i,j}}{k}\overline{\coeff{(MB)_{i,j}}{k}}\right )  .$$
Similarly, the contribution  to the post-rotation potential coming from the first two rows, denoted $\Phi'_{1,2}$, is
$$\Phi'_{1,2} = \sum_k  \sum_{j=1}^n \sum_{i=1}^2 h\left ( \coeff{(M'A)_{i,j}}{k}\overline{\coeff{(M'B)_{i,j}}{k}}\right )  .$$
By the triangle inequality:
$$ |\Phi_{1,2} - \Phi'_{1,2}| \leq \sum_k   \sum_{j=1}^n  \left | \underbrace{\sum_{i=1}^2h\left ( \coeff{(MA)_{i,j}}{k}\overline{\coeff{(MB)_{i,j}}{k}}\right ) -h\left ( \coeff{(M'A)_{i,j}}{k}\overline{\coeff{(M'B)_{i,j}}{k}}\right ) }_{\delta_{k,j}}\right |\ .$$
To avoid clutter, let  $$\alpha_{k,i,j} = \coeff{(MA)_{i,j}}{k}, \beta_{k,i,j}=\coeff{(MB)_{i,j}}{k}, \alpha'_{k,i,j} = \coeff{(M'A)_{i,j}}{k}, \beta'_{k,i,j}=\coeff{(M'B)_{i,j}}{k}\ .$$
For each $k,j$, let
\begin{eqnarray}
 \rho_{k,j} &:=& \sqrt{ |\alpha_{k,1,j}|^2 +  |\alpha_{k,2,j}|^2} =  \sqrt{ |\alpha'_{k,1,j}|^2 +  |\alpha'_{k,2,j}|^2} \label{yty} \\
  \sigma_{k,j} &:=& \sqrt{ |\beta_{k,1,j}|^2 +  |\beta_{k,2,j}|^2} =  \sqrt{ |\beta'_{k,1,j}|^2 +  |\beta'_{k,2,j}|^2} \label{tyt}\ , 
\end{eqnarray}
where the right hand equality in both (\ref{yty}) and (\ref{tyt}) is  from the fact that a rotation preserves inner products (and in particular norms) in $\C^2$.  By the definition of $h$, 
\begin{eqnarray}
h(\alpha_{k,1,j}\overline{\beta_{k,1,j}}) &=& -\alpha_{k,1,j}\overline{\beta_{k,1,j}}\log|\alpha_{k,1,j}\overline{\beta_{k,1,j}}|  \nonumber \\
&=&{\alpha_{k,1,j}\overline{\beta_{k,1,j}}}\log\left |  \rho_{k,j}\sigma_{k,j}\frac{\alpha_{k,1,j}\overline{\beta_{k,1,j}}}{\rho_{k,j}\sigma_{k,j}}  \right |  \nonumber \\
&=& \underbrace{-{\alpha_{k,1,j}\overline{\beta_{k,1,j}}} \log\left |  \rho_{k,j}\sigma_{k,j}\right | }_{(*)} \underbrace{-\rho_{k,j}\sigma_{k,j}\frac{\alpha_{k,1,j}\overline{\beta_{k,1,j}}} {\rho_{k,j}\sigma_{k,j}} \log\left | \frac{\alpha_{k,1,j}\overline{\beta_{k,1,j}}}{\rho_{k,j}\sigma_{k,j}}  \right |}_{(**)} \nonumber
\end{eqnarray}
where we simply multiplied and divided by $\rho_{k,j}\sigma_{k,j}$ inside the logarithm to get the second line, then used logarithm properties and  multiplied and divided by $\rho_{k,j}\sigma_{k,j}$ to get the third.  Similarly:
\begin{eqnarray}
h(\alpha_{k,2,j}\overline{\beta_{k,2,j}})
&=&\underbrace{- {\alpha_{k,2,j}\overline{\beta_{k,2,j}}} \log\left |  \rho_{k,j}\sigma_{k,j}\right | }_{(*)} \underbrace{-{\rho_{k,j}\sigma_{k,j}}\frac{\alpha_{k,2,j}\overline{\beta_{k,2,j}}}{\rho_{k,j}\sigma_{k,j}} \log\left | \frac{\alpha_{k,2,j}\overline{\beta_{k,2,j}}}{\rho_{k,j}\sigma_{k,j}}  \right | }_{(**)}\nonumber \\
h(\alpha'_{k,1,j}\overline{\beta'_{k,1,j}})
&=& \underbrace{- {\alpha'_{k,1,j}\overline{\beta'_{k,1,j}}} \log\left |  \rho_{k,j}\sigma_{k,j}\right | }_{(*)} \underbrace{-{\rho_{k,j}\sigma_{k,j}}\frac{\alpha'_{k,1,j}\overline{\beta'_{k,1,j}}} {\rho_{k,j}\sigma_{k,j}}\log\left | \frac{\alpha'_{k,1,j}\overline{\beta'_{k,1,j}}}{\rho_{k,j}\sigma_{k,j}}  \right |}_{(**)} \nonumber \\
h(\alpha'_{k,2,j}\overline{\beta'_{k,2,j}})
&=& \underbrace{-{\alpha'_{k,2,j}\overline{\beta'_{k,2,j}}} \log\left |  \rho_{k,j}\sigma_{k,j}\right | }_{(*)} \underbrace{-{\rho_{k,j}\sigma_{k,j}}\frac {\alpha'_{k,2,j}\overline{\beta'_{k,2,j}}}{\rho_{k,j}\sigma_{k,j}} \log\left | \frac{\alpha'_{k,2,j}\overline{\beta'_{k,2,j}}}{\rho_{k,j}\sigma_{k,j}}  \right | }_{(**)}\nonumber
\end{eqnarray}

Now to estimate $\delta_{k,j} = h(\alpha_{k,1,j}\overline{\beta_{k,1,j}})+h(\alpha_{k,2,j}\overline{\beta_{k,2,j}}) - h(\alpha'_{k,1,j}\overline{\beta'_{k,1,j}})-h(\alpha'_{k,2,j}\overline{\beta'_{k,2,j}})$, we first note that
the contribution coming from the (*) expressions vanishes.  Indeed, rotation preserves complex inner product, and hence $${\alpha_{k,1,j}\overline{\beta_{k,1,j}}} + {\alpha_{k,2,j}\overline{\beta_{k,2,j}}} = {\alpha'_{k,1,j}\overline{\beta'_{k,1,j}}} + {\alpha'_{k,2,j}\overline{\beta'_{k,2,j}}}\ .$$

To estimate the contribution coming from the (**) expressions, we simply note that the four fractions $\frac{\alpha_{k,1,j}\overline{\beta_{k,1,j}}}{\rho_{k,j}\sigma_{k,j}}, \frac{\alpha_{k,2,j}\overline{\beta_{k,2,j}}} {\rho_{k,j}\sigma_{k,j}} , \frac{\alpha'_{k,1,j}\overline{\beta'_{k,1,j}}}{\rho_{k,j}\sigma_{k,j}}, \frac{\alpha'_{k,2,j}\overline{\beta'_{k,2,j}}}{\rho_{k,j}\sigma_{k,j}}$ are at most $1$ in absolute value.  This implies that $|\delta_{k,j}| = O(\rho_{k,j}\sigma_{k,j})$.  Summing up over all $k$ and $j$, we get
\begin{eqnarray}
 \sum_{k,j} |\delta_{k,j}| &\leq& O\left ( \sum_{k,j} \rho_{k,j}\sigma_{k,j} \right)  
\leq O\left (\sqrt{\left( \sum_{k,j} \rho_{k,j}^2\right)\left(\sum_{k,j}\sigma_{k,j}^2 \right)}\right)  \nonumber \\
&=& O\left (\sqrt{\left(  \| (MA)_{1,:} \|^2+ \| (MA)_{2,:} \|^2  \right)\left(  \| (MB)_{1,:} \|^2+ \| (MB)_{2,:} \|^2  \right)}\right)  \nonumber \\
&=& O( \|MA\|_{2,\infty}\cdot\|MB\|_{2,\infty})\ . \nonumber
\end{eqnarray}
This concludes the proof.

\end{document}